\newtheorem{Corollary}{Corollary}
\begin{document}

\title{\LARGE \bf A Convex Optimization Approach to Learning Koopman Operators}

\author{M. Sznaier\thanks{The author is with the Department of 
Electrical and Computer Engineering, Northeastern University, Boston, MA 02115, email {\tt msznaier@coe.neu.edu}. This work was partially supported by NSF grants  CNS--1646121, CMMI--1638234, IIS--1814631 and ECCSÑ1808381  and AFOSR grant FA9550-19-1-0005.}
}

%Y. Chen, Y. Wang and M. Sznaier are with the Department of Electrical and Computer Engineering, Northeastern %University, Boston, MA, 02115. Constantino Lagoa is with the Department of Electrical Engineering, Penn State %University, University Park, PA 16802}. N. Ozay is with the Department of Computing and Mathematical Sciences, %Caltech, Pasadena, CA 91125}\author{ Y. Chen \hspace{1cm} Y. Wang \hspace{1cm| M. Sznaier \hspace{1cm} Necmiye Ozay %\hspace{1cm}  Constantino Lagoa}

%This work was supported in part by NSF grants
%IIS--0713003 and ECCS-�0901433, and  AFOSR
%grant FA9550�-09-�1-�0253. N. Ozay  and M. Sznaier are with the Department of Electrical and Computer Engineering, Northeastern University, Boston, MA, 02115.}}\author{Necmiye Ozay \and Mario Sznaier}

%%%%%%%%%%%%%%%%%%%
%% My new commands
%%%%%%%%%%%%%%%%%%%
\newcommand{\argmin}[1]{\underset{#1}{\operatorname{argmin}}\;}
\newcommand{\sN}{\mathbf{N}}
\newcommand{\sZ}{\mathbf{Z}}
\newcommand{\sR}{\mathbf{R}}
\newcommand{\sC}{\mathbf{C}}
\newcommand{\deq}{\doteq}
\newcommand{\st}{\ensuremath{\colon}}
\newcommand{\set}[1]{\mathcal{#1}}
\newcommand{\greekset}[1]{\boldsymbol{#1}}
\newcommand{\vect}[1]{\mathbf{#1}}
\newcommand{\mat}[1]{\mathbf{#1}}
\newcommand{\greekvect}[1]{\boldsymbol{#1}}
\newcommand{\smax}[1]{\overline{\sigma}\left( #1 \right)}
\newcommand{\eigmax}[1]{\overline{\lambda}\left( #1 \right)}
\newcommand{\lp}[1]{\ell_{#1}}
\newcommand{\Lp}[1]{\mathcal{L}_{#1}}
\newcommand{\wk}[1]{e^{j\omega_{#1}}}

\newcommand{\diag}{\operatorname{diag}}
\newcommand{\trace}{\operatorname{Tr}}
\newcommand{\rank}{\operatorname{rank}}

\newtheorem{counterexample}{Counter--Example}
\newtheorem{coro}{Corollary}
\newtheorem{condition}{Condition}

\newtheorem{Algorithm}{Algorithm}
\newtheorem{Problem}{Problem}
\newtheorem{Definition}{Definition}
\newtheorem{Theorem}{Theorem}
\newtheorem{Lemma}{Lemma}
\newtheorem{Remark}{Remark}
\newtheorem{Example}{Example}
\newtheorem{Proposition}{Proposition}

\def\beq{\begin{equation}}
\def\eeq{\end{equation}}

\newcommand{\cb}[1]{{\color{blue} #1}}

\maketitle
\begin{abstract} 
Koopman operators provide tractable means of learning linear approximations of  non-linear dynamics. Many approaches have been proposed to find these operators,  typically  based upon approximations using an a-priori fixed 
 class of models. However, choosing appropriate models and bounding the approximation error is far from trivial. 
 Motivated by these difficulties, in this paper we propose an optimization based approach to  learning  Koopman operators from data. Our results show that the Koopman operator, the associated Hilbert space of observables and a suitable dictionary can be obtained by solving two rank-constrained semi-definite programs (SDP). While in principle these problems are NP-hard, the use of standard relaxations of rank leads to  convex SDPs.
 % that can be efficiently solved. 
 %In addition, these SDPs have an underlying sparse structure that can be exploited to substantially reduce the computational burden.  
\end{abstract}

%\begin{keywords}%
 %{Koopman Operators, Learning Nonlinear Dynamics, Nonlinear Identification.}}
%\end{keywords}
%%%%%%%%%%%%%%%%%%%%%%%%%%%%%%%%%%%%%%%%%%%%%%%%%%%%%%%%%%%%%%%%%%%%%%%%%%%%%%%%
\section{Introduction and motivation}\label{sec:intro}

%Many scenarios  involve predicting the output of an unknown non-linear system based on past measurements and some a-priori information. In principle, this problem can be solved by first identifying the unknown dynamics and then using these dynamics to propagate the state. However, non-linear systems identification is still largely open. A large body of  methods select a class of parameterized models and seek to identify the values of these parameters by solving an optimization problem. Unfortunately, this optimization is typically NP-hard, even for restricted classes of models \cite{Sznaier2009}.
% identification of these models in the presence of noise is generically NP-hard. Alternatively, nonlinear dynamics can be approximated by piecewise linear ones, since piecewise linear systems are universal approximators \cite{Breiman}. Still, identification of piece-wise linear systems from noisy data is also known to be NP hard.
Many scenarios  involve predicting the output of an unknown non-linear system based on past measurements and some a-priori information. 
 Recently, substantial interest has been devoted to the use of Koopman operator based methods  to solve this problem, as a tractable alternative to nonlinear identification. An excellent introduction to the topic is given in \cite{Mezic2013}, and more recent references can be found in 
 \cite{Lusch2018,Otto2019}.
  Given a non-linear discrete time system of the form:
\beq \label{eq:nonlinear} \begin{split}
\greekvect{\xi}_{k+1} & = f(\greekvect{\xi}_k) \; \text{where} \;
\greekvect{\xi}_k  = \begin{bmatrix} \mat{x}_{k-r+1}^T &\ldots & \mat{x}_{k}^T\end{bmatrix}^T, \; \mat{x}_j \in \mathbb{R}^n 
\end{split}
\eeq
 {let $\mathbb{H}$ denote a Hilbert space of functions  $\greekvect{\psi}(\greekvect{\xi})\colon \mathbb{R}^{nr} \to \mathbb{R}^{mr}$(the so called observables)}. The Koopman $\mathcal{K}$ operator acts on the elements of $\mathbb{H}$, by propagating their values one step into the future:
\beq \label{eq:nonlineark}
(\mathcal{K} \circ \greekvect{\psi})(\greekvect{\xi}_k) = (\greekvect{\psi}\circ f)(\greekvect{\xi}_{k}) =\greekvect{\psi}(\greekvect{\xi}_{k+1}) \eeq
$\mathcal{K}$ is a linear operator, albeit typically infinite dimensional. { When it has a countable set of eigenfunctions $\greekvect{\phi}_i(.)$ with eigenvalues $\mu_i$, the observables {$\greekvect{\psi}(.)$} can be propagated as follows.  Let $\mat{a}=\begin{bmatrix}a_1 \ldots \end{bmatrix}^T$ denote the coordinates of  $\greekvect{\psi}(.)$ in the basis spanned by $\greekvect{\phi}(.)$, that is
 \[  \greekvect{\psi}(.) = \sum a_i \greekvect{\phi}_i (.) \doteq \greekvect{\Phi}(.)\mat{a}, \; \text{where: }  \greekvect{\Phi}(.)=\begin{bmatrix} \greekvect{\phi}_1(.)\ldots \end{bmatrix} \]
 Then
\[ (\mathcal{K} \circ \greekvect{\psi})(.) = \sum a_i \mu_i \greekvect{\phi}_i (.) = \greekvect{\Phi}(.)\mat{M}\mat{a}, \text{where $\mat{M}$ = diag($\mu_i$)} \] }

In particular, if the state $ \greekvect{\xi} \in \text{span}\{ \greekvect{\phi}_i \}$, then $ \greekvect{\xi}_{k+1}= \greekvect{\Phi}(\greekvect{\xi}_k)\mat{M}\mat{a}$.
While this approach leads for to linear representations of \eqref{eq:nonlinear}, identifying the Koopman eigenfunctions from data is not trivial.

Extended Dynamical Mode Decomposition (EDMD) type approaches seek to identify approximations to Koopman operators over a restricted subspace, defined by the span of a given dictionary $\mathcal{D}(.) \doteq \begin{bmatrix}  \greekvect{\psi}_1(.) \ldots  \greekvect{\psi}_N(.) \end{bmatrix}$. 
In this subspace, the Koopman operator can then be approximated by a matrix $\mat{K}\in \mathbb{R}^{N \times N}$  that
propagates the coefficients of the expansion, that is, for $\greekvect{\psi}(.) = \mathcal{D}(.)\mat{a}$, then $(\mathcal{K} \circ \greekvect{\psi})(.) = \mathcal{D}(.)\mat{Ka}$. 
{
Typically, given experimental data $\mat{X} \doteq \begin{bmatrix} \greekvect{\xi}_1 & \greekvect{\xi}_2 & \ldots & \greekvect{\xi}_T \end{bmatrix}$, $\mat{K}$
  is found by minimizing the one-step prediction error over a set of  observables.  Specifically, this approach considers  $m$ observables
   $\greekvect{\psi}^{(j)}(.) \doteq  \mathcal{D}(.)\mat{a}_j$, each defined by a coordinate vector $\mat{a}_j$, and solves:
\beq \label{eq:EDMD} \mat{K} = \mathop{argmin} \limits_{K} \sum_{j=1}^m \sum_{k=1}^{T-1} \| [\mat{D}( \greekvect{\xi}_{k+1}) - \mat{D}( \greekvect{\xi}_{k})\mat{K}]\mat{a}_j\|_2^2 \eeq
where $\mat{D}( \greekvect{\xi}_k)$ is the matrix obtained by evaluating the dictionary a the point $\greekvect{\xi}_k$.}
% If the set of observables $m$ is rich enough, then  $\mat{K}$ is given by the least squares solution to \eqref{eq:EDMD}. 
 EDMD 
often works well, but requires choosing a suitable dictionary, with the approximation error strongly hinging on this choice. This approximation error can be reduced by considering larger dictionaries, but this may lead to overfitting of  the data and poor generalization capabilities
 \cite{Otto2019}.
 
 %Motivated by these difficulties, a substantial amount of research has been devoted to the problem of learning the dictionary from experimental data. In particular,  
 
% Deep learning motivated approaches use  a neural network  parameterized by a set of weights $W$ as dictionary. The  Koopman operator $\mat{K}$ is found by alternatively minimizing the one-step ahead prediction error  over  $W$, to find a candidate dictionary, and over $\mat{K}$ to find the best representation with respect to the current dictionary.  Alternating minimization methods  are only guaranteed to converge to a local minima. Further, the issue of which architectures are best suited to represent dynamical systems is largely open.  Recent work  \cite{Lusch2018,Otto2019} has taken a step towards addressing this question by proposing encoder/decoder type  architectures that map states $\greekvect{\xi}$ to latent variables $\mat{y}$ and impose approximately linear dynamics for the evolution of the latter. A salient feature of these approaches is that the states $\greekvect{\xi}$ are no longer required to be in the span of the Koopman eigenfuctions. As shown in \cite{Otto2019}, the use of a nonlinear decoder to map $\mat{y}$ back to states (as opposed to a linear one if  $\greekvect{\xi} \in$ span $(\mathcal{D}$) allows for using substantially smaller dictionaries. Still, these methods require ad-hoc  parameter selection (dimension of the latent variables, order of the dynamics) and, as before can lead to local minima. 

 Deep learning motivated approaches use  a neural network  parameterized by a set of weights $W$ as dictionary.  The  \begin{wrapfigure}[23]{r}{0.5\textwidth}
 \includegraphics[width=0.5\textwidth]{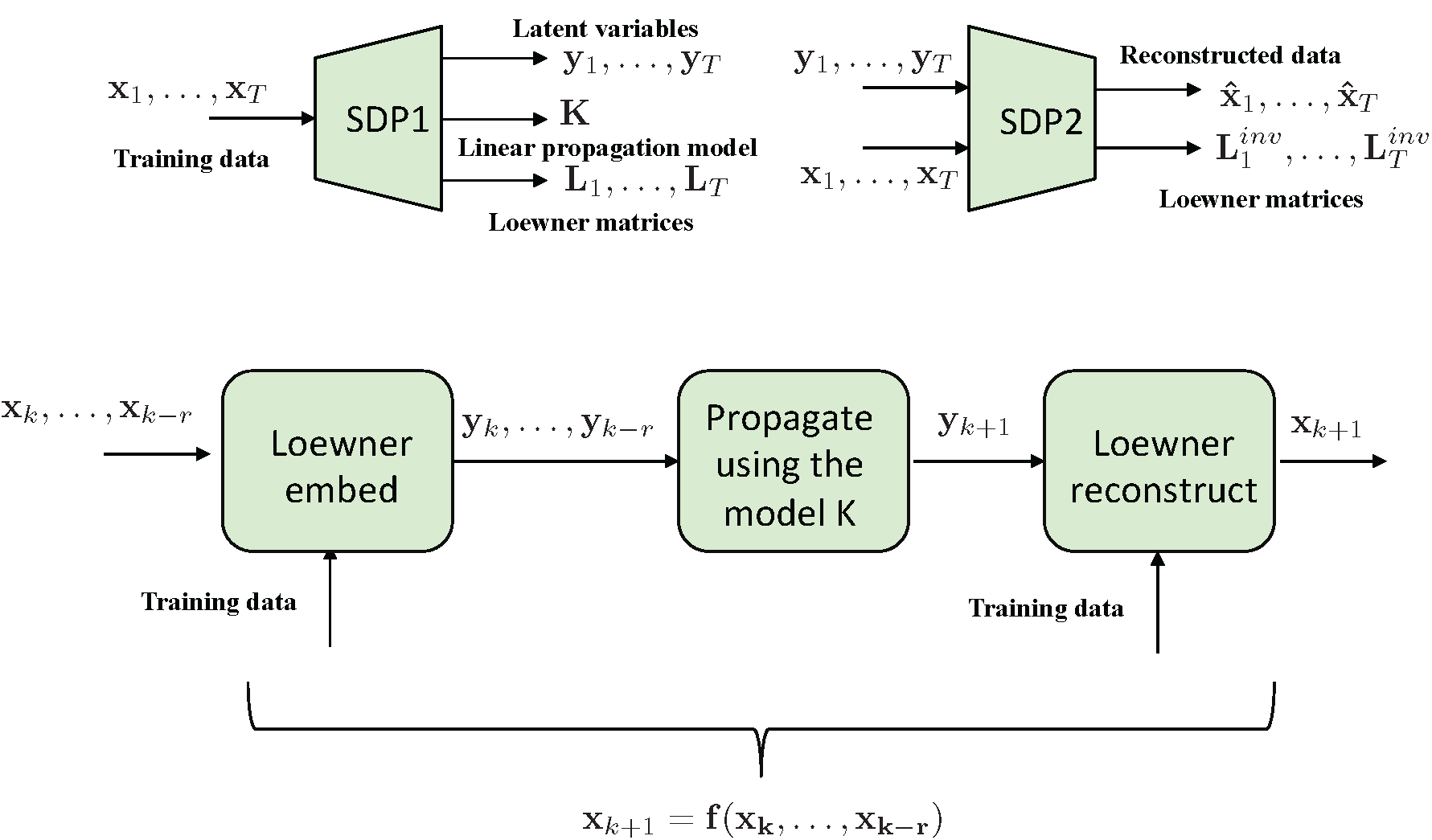}
 \caption{\small Top: Finding Koopman operators via Semi-Definite Programs. The first SDP (Section \ref{sec:relaxation}) finds the observables
 $\mat{y}_k$ corresponding to given data $\mat{x}_k$, the Koopman operator $\mat{K}$, and the Loewner matrices that encode the mapping $\mat{x}_k \to \mat{y}_k$. The second SDP (Section \ref{sec:mapping}) finds the inverse mapping $\mat{y}_k \to \mat{x}_k$. Bottom: The pipeline to predict $\mat{x}_{k+1}/
 \mat{x}_k \ldots, \mat{x}_{k-r}$ uses explicit expressions for  the predictions of the model $\mat{K}$. Thus, it only requires $\mathcal{O}(r)$ operations.}\label{fig:Koopman} \end{wrapfigure}  Koopman operator $\mat{K}$ is found by alternatively minimizing the  prediction error  over  $W$ and $\mat{K}$.  Alternating minimization methods  can get trapped in local minima. Further, the issue of which architectures are best suited to represent dynamical systems is largely open.  Recent work  \cite{Lusch2018,Otto2019}  proposed encoder/decoder type  architectures that map states $\greekvect{\xi}$ to latent variables $\mat{y}$ and impose approximately linear dynamics for the evolution of the latter. A salient feature of these approaches is that the states $\greekvect{\xi}$ are no longer required to be in the span of the Koopman eigenfuctions. As shown in \cite{Otto2019}, the use of a nonlinear decoder to map $\mat{y}$ back to   $\greekvect{\xi}$ (as opposed to a linear one if  $\greekvect{\xi} \in$ span \{ $\mathcal{D}$\}) results in substantially smaller dictionaries. Still, these methods require ad-hoc  parameter selection (dimension of the latent variables, order of the dynamics) and, as before, can lead to local minima.

 An alternative approach, HAVOK \cite{Havok}, rooted in Takens embedding theorem \cite{Takens}, seeks to model the trajectories of \eqref{eq:nonlinear} by considering a forced linear system, whose dynamics are precisely the Koopman operator. The  states and forcing term are obtained from the singular value decomposition of a Hankel matrix $\mat{H}_\mat{x}$, formed by delayed measurements of $\mat{x}_k$. As shown in \cite{Havok} this approach successfully recovers the trajectories of nonlinear chaotic systems, as linear combinations of a given basis.  However, this linear reconstruction, combined with the difficulty of  identifying the linear dynamics from the svd of $\mat{H}_\mat{x}$ \cite{Havok} can lead to high order models (e.g. a 14$^{th}$ order model for the third order Lorentz system).   
 
 In this paper, motivated by \cite{FeiICCV11, FeiHankel, Havok,Lusch2018,Otto2019}, we  propose an alternative, convex optimization based, approach to the problem of data-driven identification of Koopman operators.    The philosophy, illustrated in Fig. \ref{fig:Koopman}, uses delay coordinates, but, as in  \cite{Lusch2018,Otto2019}  does not impose that the state of the system belongs to span of the Koopman eigenfunctions. Rather, we identify a manifold of latent variables where the dynamics are linear and map back to state-space via a non-linear transformations.  The problems of finding the embedding manifold, the associated Koopman operators and the mapping back to state-space are all recast as  rank-constrained semi-definite programs (SDPs). In turn, these can be relaxed to convex optimizations using the standard weighted nuclear norm surrogate for rank. Advantages of the proposed approach include:
 \begin{itemize}
 \item  A simple rank check allows for certifying that the solution to these convex SDPs is indeed the Koopman operator underlying the given data.

  \item Does not specify a priory the dimension of the embedding or the order of the dynamics. Rather, both of these can be obtained from the solution to the SDPs.  
  \item Minimizing  the  order of the linear  dynamics leads to simpler models than competing methods. %\cite{Havok}.
  \item In cases where the spectrum of the Koopman operator is not finite, it allows for obtaining finite dimensional approximations with guaranteed approximation error.
  
  \item These SDPs have an underlying structure, chordal sparsity, that can be exploited to substantially reduce computational complexity, leading to algorithms that scale linearly with the number of data points.

 %\item Requires very few hyper-parameters
 % (bounds on the Lipschitz constants of the mappings and diameter of local neighborhoods where the inverse mapping can be approximated by a rational function).
 
\end{itemize} 

 The paper is organized as follows. In section \ref{sec:preliminaries} we formally state the problem under consideration and summarize some needed results on  rational interpolation. Section \ref{sec:rank} contains the main results of the paper. It shows that a Hilbert space $\mathbb{H}$ of observables, its associated Koopman dictionary and eigenfunctions, and the mapping back to state-space can be found by solving rank-constrained SDPs.  Section  \ref{sec:examples}
 illustrates the proposed approach with some simple examples. Finally, Section \ref{sec:conclusions} summarizes the paper and points out to directions for extending its results.
 
%\newpage
\section{Preliminaries}\label{sec:preliminaries}
For ease of reference, next we summarize our notation and recall some results on interpolation.
%used in the paper 
%and recall some results on sparse polynomial optimization that play a key role in
 %establishing the main result of this paper.

\subsection{Notation} \label{sec:notation}

\begin{supertabular}{p{0.1\linewidth}p{0.9 \linewidth}}
%$\mathbb{R}$ & set of real number\\
$|\mathcal{S}|$ & cardinality of the set $\mathcal{S}$ \\
$\vect{x},\vect{M}$ & a vector in $\mathbb{R}^n$ (matrix in $\mathbb{R}^{n\times m}$) \\
$\otimes$ & Matrix Kronecker product \\
$\vect{M} \succeq 0$ & the matrix $\vect{M}$ is positive semidefinite. \\
$\|\mat{M}\|_*$ & nuclear norm: $\|\mat{M}\|_* = \Sigma \text{ singular values of $\mat{M}$}$.\\
%$\mathbb{S}_+^n$ & cone of positive semidefinite matrices in $\mathbb{R}^{n\times n}$ \\
%$\mathcal{N}_{R(L)} (\mat{M})$ & right (left) null space of $\mat{M}$.  \\
$\mat{H}^{m}_y$ & Hankel matrix with $m$ columns associated with a vector sequence $\vect{y}(.)$, with block elements
$(\mat{H}_y^{m})_{i,j}=\mat{y}_{i+j-1}$ \\

\textbf{svec}$(\mat{M)}$ & (column-wise) vectorization of the unique elements of a symmetric matrix $\mat{M}$.\\

\textbf{smat}$(\mat{v)}$ & create a symmetric matrix $\mat{M}$ from the elements of $\mat{v}$ such that
\textbf{svec}$(\mat{M}) = \mat{v}$ \\
%$z(.)$ & one step time advance operation. For a time series $x_k$, ${z}(x_{k})=x_{k+1}$\\
\end{supertabular}

%\begin{Definition} A \emph{polynomial ideal} over $\sR$ is a set $\mathbb{I}\subseteq\sR[x_1,\ldots,x_n]$ such %that
%\begin{enumerate}
%	\item $0\in \mathbb{I}$,
%	\item If $a,b\in \mathbb{I}$, then $a+b\in I$,
%	\item If $a\in \mathbb{I}$ and $b\in\sR[x_1,\ldots,x_n]$, then $a\cdot b \in I$.
%\end{enumerate}
%\end{Definition}

%\begin{Definition} A \emph{cone} is a set $\mathcal{P}\subseteq\sR[x_1,\ldots,x_n]$ such that
%\begin{enumerate}
%	\item If $a,b\in \mathcal{P}$, then $a+b\in \mathcal{P}$,
%	\item If $a,b\in \mathcal{P}$, then $a\cdot b\in \mathcal{P}$,
%	\item If $a\in\sR[x_1,\ldots,x_n]$, then $a^2 \in \mathcal{P}$.
%\end{enumerate}
%\end{Definition}

%\begin{Definition} A polynomial $p(\vect{x})$ is said to be a \emph{sum of squares polynomial} (SOS), if it can be written as $p=\sum_{j=1}^m u_j^2$ for some polynomials $u_1,\ldots,u_m$.
%\end{Definition}

%\begin{Definition} A  set $K \in \sR^n$ is said to be semialgebraic if its defined by
%a finite number of polynomial equations and inequalities
%\end{Definition}

%\begin{Definition}
%A polynomial $p(\vect{x})$ with $\vect{x}\in\sR^n$ is called \emph{sparse} if the coefficients of most of the %monomials in it are zero (an example is when only some of the $n$ variables appear in $p$).
%\end{Definition}

%\subsection{Background on Koopman Operators}\label{sec:koopman}

%Need to write something here, specially about DMD and Kernel DMD

\subsection{Rational Interpolants and Loewner Matrices} \label{sec:Loewner}

%The next result, related to the rational interpolation, will be used to find explicit correspondences between the state $\greekvect{\xi}_k$ of \eqref{eq:nonlinear}
%and the observables $\greekvect{\psi}(\greekvect{\xi}_k)$.
 Given $2n$ scalar pairs $(x_i,y_i)$, consider the problem of finding a rational function $g(x) \doteq \frac{\sum_{k=1}^m a_k x^k}{\sum_{k=1}^m b_k x^k}$ such that $y_i = g(x_i), \; i=1,\ldots 2n$. Define the Loewner matrix
 
$\mat{L} \in \mathbb{R}^{n\times n}$ as :
 \begin{equation}\label{eq:L2}
\mat{L}(x,y) = 
 \begin{bmatrix}
 \frac{y_1-y_{n+1}}{x_1-x_{n+1}}  & \frac{y_1-y_{n+2}}{x_1-x_{n+2}} & \ldots &\frac{y_1-y_{2n}}{x_1-x_{2n}} \\
 \frac{y_2-y_{n+1}}{x_2-x_{n+1}}  & \frac{y_2-y_{n+2}}{x_2-x_{n+2}} & \ldots &\frac{y_2-y_{2n}}{x_2-x_{2n}} \\
 \vdots & \vdots & \ddots & \vdots \\
 \frac{y_n-y_{n+1}}{x_n-x_{n+1}}  & \frac{y_n-y_{n+2}}{x_n-x_{n+2}} & \ldots &\frac{y_n-y_{2n}}{x_n-x_{2n}} 
  \end{bmatrix} 
 \end{equation}
 Then, there exists a  rational function of order at most $m$ that interpolates the given data points if and only if rank$(\mat{L}) \leq m-1$  \cite{Antoulas:1986,Ionita:2013}.
 
\subsection{Statement of the problem}\label{sec:statement}

Consider the nonlinear dynamical system:
\beq \label{eq:nonlinear1} \begin{split}
%\greekvect{\xi}_{k+1} & = f(\greekvect{\xi}_k) \; \text{where}\\
%\greekvect{\xi}_k & = \begin{bmatrix} \mat{x}_{k-r+1}^T &\ldots & \mat{x}_{k}^T\end{bmatrix}^T, \; \mat{x}_j \in \mathbb{R}^n 
\mat{x}_{k+1} & = f(\mat{x}_k,\ldots,\mat{x}_{k-r+1})
 \; \qquad \mat{x}_j \in \mathbb{R}^n 
\end{split}
\eeq
where both the dynamics $f(.)$ and its order $r$ are unknown.  Our goal is to identify its associated Koopman operator, over a suitable  space of observables, from experimental data $\mat{x}$. Specifically:
\begin{Problem}\label{prob:problem1} Given a set of  $N$ trajectories $\{\mat{x}_k^{(i)}\}_{k=1}^{T_i},\;i=1,\ldots, N$, $\mat{x}_k^{(i)} \in \mathbb{R}^n$, find a (functional) dictionary $\mathcal{D}(.)$, a Hilbert space
$\mathbb{H}$ of observables $\greekvect{\psi}(.)$ of the form:
\beq \label{eq:psi}\begin{split}
%&\greekvect{\psi}(\greekvect{\xi}_k) \doteq \begin{bmatrix}\mat{y}_{k-r+1} \\ \vdots \\ \mat{y}_{k} \end{bmatrix} \in \text{span $\{\mathcal{D}(\greekvect{\xi}_k) \}$ 
&\greekvect{\psi}(\greekvect{\xi}_k) \doteq \begin{bmatrix}\mat{y}_{k-r+1}^T  \ldots  \mat{y}_{k}^T \end{bmatrix}^T \in \text{span $\{\mathcal{D}(\greekvect{\xi}_k) \}$ with 
 $\mat{y}_j \in \mathbb{R}^m$} \\
 &\text{where $\greekvect{\xi}_k \doteq \begin{bmatrix}\mat{x}_{k-r+1} \ldots \mat{x}_{k}\end{bmatrix}^T$}
\end{split}
\eeq
and an operator $\mathcal{K}\colon \mathbb{H} \to \mathbb{H}$ such that $(\mathcal{K}\circ \greekvect{\psi})(\greekvect{\xi}_k) =
 \greekvect{\psi}(\greekvect{\xi}_{k+1})$.
\end{Problem}

%\begin{Remark} Note that in terms of $\greekvect{\xi}_{k}$, \eqref{eq:nonlinear1} can be rewritten as
%\beq \label{eq:nonlinear1a}
%\greekvect{\xi}_{k+1} = \begin{bmatrix}\mat{x}_{k-r+2} & \ldots \mat{x}_{k} & f(\greekvect{\xi}_{k}) \end{bmatrix}^T \doteq \tilde{f}(\greekvect{\xi}_{k})
%\eeq
%In the paper, for notational simplicity, with a slight abuse of notation, we will  use $f$ instead of $\tilde{f}$.
%\end{Remark}

Problem \ref{prob:problem1}  is reminiscent of EDMD approaches. However, the main difference is that here  we seek to learn the dictionary $\mathcal{D}$  and the dimensions of the space $\mathbb{H}$  directly from the data, rather than postulating a fixed dictionary and dimension. Further, if Problem \ref{prob:problem1} has a solution, the resulting operator $\mathcal{K}$ is indeed the exact Koopman operator in $\mathbb{H}$.
%rather than an $\ell^2$ norm approximation to it.

\begin{Remark} As stated, Problem \ref{prob:problem1} is ill posed, since  $\|\greekvect{\psi}(.)\|$ can be arbitrarily small or large. To avoid this, and with an eye towards reconstruction of $\greekvect{\xi}$ from $\greekvect{\psi}$, we will impose the additional constraints: 
\[ \begin{split}
& \frac{1}{ M_{\ell}(\greekvect{\xi}_i)}\|\greekvect{\psi}(\greekvect{\xi}_i ) - \greekvect{\psi}(\greekvect{\xi}_j) \|_2 \leq  \|\greekvect{\xi}_i - \greekvect{\xi}_j\|_2 \leq  M_u(\greekvect{\xi}_i)\|\greekvect{\psi}(\greekvect{\xi}_i ) - \greekvect{\psi}(\greekvect{\xi}_j)\|_2  \;
 \text{$\forall  \greekvect{\xi}_j$ such that  $\|\greekvect{\xi}_i - \greekvect{\xi}_j\|_2 \leq \delta$} \\
&\mat{y}(\greekvect{\xi}_k)_j^T\mat{y}(\greekvect{\xi}_k)_j =  (\greekvect{\xi}_k)_j^T(\greekvect{\xi}_k)_j \;  j=1,\ldots r, \;\text{for all $k$ in a given set of ``anchor" points  $\mathbb{I}$}\end{split}
\]
where $\mat{y}_j$ denotes the
$j^{\rm{th}}$ block component of $ \greekvect{\psi}( \greekvect{\xi}_k)$ and
the scalar $\delta$ and the set of anchor points $\mathbb{I}$ are design hyperparameters. 
That is, we impose that (a) the mapping  $\phi \colon  \greekvect{\xi} \to \greekvect{\psi}$ and its inverse are locally Lipschitz continuous, with Lipschitz constants
$M_\ell(\greekvect{\xi}_i)$ and $M_u(\greekvect{\xi}_i)$; and (b)  the function $\greekvect{\psi}(.)$ is normalized to have components with unity gain at some given ``anchor" points.
\end{Remark}

\section{Learning Koopman Operators via  Semi Definite Optimization}\label{sec:rank}

In this section we present the main theoretical result of the paper: a reformulation of Problem \ref{prob:problem1} as a rank minimization subject to a positive semi-definite constraint. Since this problem is generically NP hard, we then develop a tractable convex relaxation, along with optimality certificates.

\subsection{Finding Koopman operators as a  constrained rank minimization} \label{sec:rank1}

Consider the following feasibility problem (in $\vect{y}, r, m$):
\begin{Problem}\label{prob:feas}
Given  a set of $N$ trajectories $\{\vect{x}_\ell^{(i)}\}_{\ell=1}^{T_i},\;i=1,\ldots N$, $\mat{x}_\ell^{(i)} \in \mathbb{R}^n$, find  scalars $r,m$ and $N$ trajectories $\vect{y}_k^{(i)} \in \mathbb{R}^m, k=1,\ldots,T_i$, such that the following holds:

\begin{align}
&\text{rank $(\vect{H_y}^{(r+1)}) \leq r $, where $\vect{H_y}^{(r+1)} \doteq \begin{bmatrix}  \vect{H}^{(r+1)}_{\vect{y}^{(1)} }\\ \vdots \\ \vect{H}^{(r+1)}_{\vect{y}^{(N)} } \end{bmatrix}$} \; \text{and} \;
  \vect{H}_{\vect{y}^{(i)}}^{(r+1)}\doteq\begin{bmatrix}
              \vect{y}_{1}^{(i)} & \vect{y}_{2}^{(i)} & \cdots & \vect{y}_{r+1}^{(i)} \\
                \vect{y}_{2}^{(i)} & \vect{y}_{3}^{(i)} & \cdots & \vect{y}_{r+2}^{(i)} \\
                \vdots  & \vdots  & \ddots & \vdots  \\
            \vect{y}_{T_i -r}^{(i)} & \vect{y}_{T_i -r+1}^{(i)} & \cdots & \vect{y}_{T_i}^{(i)}
            \end{bmatrix} \label{eq:feas1} \\
            &\begin{array}{l} \| \vect{x}_s- \vect{x}_t\|_2 \leq  M_u(\vect{x}_s)\|\vect{y}_s - \vect{y}_t\|_2    \\ 
             \|\vect{y}_s - \vect{y}_t\|_2   \leq   M_{\ell}(\vect{x}_s)\| \vect{x}_s - \vect{x}_t\|_2 \end{array} \Big \}
             \text{$\forall  (s,t)$ such that  $\|\vect{x}_s - \vect{x}_t\|_2 \leq \delta$}
              \label{eq:feas2} \\
 & \; \|\vect{y}_s\|_2 = \|\vect{x}_s\|_2 \; \text{for all $s \in \mathbb{I}$}   \label{eq:feas3}
\end{align}
\end{Problem}
As shown next, the solution to Problem \ref{prob:problem1} (e.g the dictionary $\mathcal{D}$, the embedding Hilbert space $\mathbb{H}$ and the associated Koopman operator) can be constructed from any feasible solution to \eqref{eq:feas1}-\eqref{eq:feas3}.

\begin{Theorem}\label{Teo:theo1} Let $(\mat{y}^{(i)}_k, r, m)$  denote a feasible solution to \eqref{eq:feas1}-\eqref{eq:feas3} with $\mat{y}_k^{(i)} \in \mathbb{R}^m$ and rank$( \vect{H_y}) =r* \leq r$. Let $\mat{H_y}^{(r^*+1)}$ and  $\mathcal{N}_R(\vect{H_y}^{(r^*+1)})$
denote the Hankel matrix obtained by rearranging the elements of $\mat{H_y}^{(r+1)}$ into
$r^*+1$ columns, and its right null space, respectively. Note that by construction rank$(\mat{H_y}^{(r^*+1)})=r^*$ and thus $dim(\mathcal{N}_R(\vect{H_y}^{(r^*+1)})) \geq 1$.
 Consider a vector $\mat{p} \in \mathcal{N}_R(\vect{H_y}^{(r^*+1}))$, of the form
$\mat{p}=\begin{bmatrix}a_0 & \ldots  &a_{(r^*-1)} & -1\end{bmatrix}^T$.  Let $\rho_j, \; j=1,\ldots,r^*$ denote the roots of the polynomial
$ \mathcal{P}(\rho) \doteq  \rho^{r^*} - \sum_{i=0}^{r^*-1} a_i \rho^{i} $ and define the $r^*$ vectors
\[\mat{v}_j \doteq \begin{bmatrix} 1 & \rho_j & \rho_j^2 & \ldots \rho_j^{r^*} \end{bmatrix}^T \]
Finally, let $\mat{V}$ denote the Vandermonde matrix $\mat{V}=\begin{bmatrix} \mat{v}_1 & \mat{v}_2 & \ldots & \mat{v}_{r^*} \end{bmatrix}$.
Then: \begin{enumerate}
\item The desired dictionary $\mathcal{D}(.)$  has the matrix representation $\mat{D} =  \mat{V}  \otimes \mat{I}_m$.
%in the basis defined by the columns of $\mat{V}$.
\item The Hilbert space $\mathbb{H}$ of observables is given by span($\mathcal{D}$), with the usual inner product.
\item The operator 
$\mathcal{K}\colon \mathbb{H}\to \mathbb{H}$ with the matrix representation $\greekvect{\Lambda}=\text{diag}(\rho_i)\otimes \mat{I}_m$ in the basis defined by the columns of  $\mat{V}$ is  the Koopman operator associated with \eqref{eq:nonlinear} in the space $\mathbb{H}$.

%\item Consider the operator 
%$\mathcal{K}\colon \mathbb{H}\to \mathbb{H}$ with the matrix representation $\greekvect{\Lambda}=\text{diag}(\rho_i)\otimes \mat{I}_m$ in the basis defined by the columns of  $\mat{V}$. Then $\mathcal{K}$ is the Koopman operator associated with \eqref{eq:nonlinear} in the space $\mathbb{H}$.
%\item The eigenfunctions of $\mathcal{L}$ are of the form $\phi_j(\greekvect{\xi}) \doteq (\mat{v}_j  \otimes  \mat{I}_m)  \mat{c}_j(\greekvect{\xi})  $, where $\mat{c}\in \mathbb{R}^m$ satisfies the first order dynamics $ \mat{c}_j(\greekvect{\xi}_{k+1}) =  \rho_j\mat{c}(\greekvect{\xi}_{k}) $
\end{enumerate}
\end{Theorem}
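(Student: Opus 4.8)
The plan is to read the Hankel rank constraint \eqref{eq:feas1} as a linear recurrence on the $\mat{y}$-sequences, solve it in closed form, and then recognize the stacked observable windows as living in the column span of the Vandermonde matrix $\mat{V}$, a subspace on which the one-step time shift acts diagonally.

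\textbf{From rank to a recurrence.} First I would observe that the rearranged matrix $\mat{H_y}^{(r^*+1)}$ has exactly $r^*+1$ block columns and, by construction, rank $r^*$, so its right null space has dimension exactly one; moreover its first $r^*$ block columns are linearly independent, hence any null vector has nonzero last entry and, after normalization, has the stated form $\mat{p}=\begin{bmatrix}a_0&\ldots&a_{r^*-1}&-1\end{bmatrix}^T$. Reading $\mat{H_y}^{(r^*+1)}\mat{p}=\mat{0}$ off block row by block row yields the scalar linear recursion $\mat{y}_{k+r^*}^{(i)}=\sum_{j=0}^{r^*-1}a_j\,\mat{y}_{k+j}^{(i)}$, valid along every trajectory $i$ and componentwise in $\mathbb{R}^m$, whose characteristic polynomial is precisely $\mathcal{P}(\rho)$.

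\textbf{Solving the recurrence and identifying $\mathbb{H}$.} Assuming the roots $\rho_1,\ldots,\rho_{r^*}$ of $\mathcal{P}$ are distinct (the generic case; see the obstacle below), the solution space of this recurrence is spanned by the geometric modes $k\mapsto\rho_j^{k}$, so each trajectory can be written $\mat{y}_k^{(i)}=\sum_{j=1}^{r^*}\rho_j^{k}\,\mat{c}_j^{(i)}$ for coefficient vectors $\mat{c}_j^{(i)}\in\mathbb{C}^m$ occurring in conjugate pairs so that $\mat{y}_k^{(i)}$ is real. Stacking a length-$(r^*+1)$ window then factors through a Kronecker product: $\greekvect{\psi}(\greekvect{\xi}_k)=\sum_{j=1}^{r^*}\mat{v}_j\otimes\bigl(\rho_j^{k}\mat{c}_j^{(i)}\bigr)=(\mat{V}\otimes\mat{I}_m)\,\mat{a}$, with coordinate vector $\mat{a}=\begin{bmatrix}(\rho_1^{k}\mat{c}_1^{(i)})^T&\ldots&(\rho_{r^*}^{k}\mat{c}_{r^*}^{(i)})^T\end{bmatrix}^T$. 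This proves items 1 and 2: every observable lies in $\mathrm{span}(\mat{V}\otimes\mat{I}_m)$, and this span, being finite dimensional, is a Hilbert space under the ordinary inner product. I would also record here that $\greekvect{\psi}$ is genuinely a function of $\greekvect{\xi}$ (not merely of the sample index), since the lower Lipschitz inequality in \eqref{eq:feas2} forces $\mat{y}_s=\mat{y}_t$ whenever $\mat{x}_s=\mat{x}_t$, so coincident windows $\greekvect{\xi}$ produce coincident $\greekvect{\psi}$.

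\textbf{The Koopman operator.} Shifting the window index by one and using the closed form again gives $\greekvect{\psi}(\greekvect{\xi}_{k+1})=\sum_{j=1}^{r^*}\mat{v}_j\otimes\bigl(\rho_j^{k+1}\mat{c}_j^{(i)}\bigr)=(\mat{V}\otimes\mat{I}_m)(\mathrm{diag}(\rho_j)\otimes\mat{I}_m)\,\mat{a}=\mat{D}\,\greekvect{\Lambda}\,\mat{a}$. Because $\mat{V}$ is Vandermonde in the distinct nodes $\rho_j$ it has full column rank, hence so does $\mat{D}=\mat{V}\otimes\mat{I}_m$, and the coordinate vector $\mat{a}$ of any element of $\mathbb{H}$ is unique; therefore $\mathcal{K}\colon\mat{D}\mat{a}\mapsto\mat{D}\greekvect{\Lambda}\mat{a}$ is a well-defined linear operator on $\mathbb{H}$ with matrix $\greekvect{\Lambda}=\mathrm{diag}(\rho_j)\otimes\mat{I}_m$ in the basis given by the columns of $\mat{V}\otimes\mat{I}_m$, and $\mathbb{H}$ is $\mathcal{K}$-invariant because the recurrence is shift-invariant. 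Since the data obey the dynamics, $\greekvect{\xi}_{k+1}=f(\greekvect{\xi}_k)$, the identity just derived reads $(\mathcal{K}\circ\greekvect{\psi})(\greekvect{\xi}_k)=\greekvect{\psi}(\greekvect{\xi}_{k+1})=(\greekvect{\psi}\circ f)(\greekvect{\xi}_k)$, which is exactly \eqref{eq:nonlineark}; hence $\mathcal{K}$ is the Koopman operator of \eqref{eq:nonlinear} on $\mathbb{H}$, proving item 3.

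\textbf{Main obstacle.} The delicate point is the distinct-roots assumption implicit in the Vandermonde/diagonal construction: if $\mathcal{P}$ has a repeated root one must replace $\mat{V}$ by a confluent Vandermonde matrix and $\greekvect{\Lambda}$ by the corresponding Jordan block, which I would dispatch as a remark rather than rework the argument. A related bookkeeping nuisance is reconciling the order $r$ that defines $\greekvect{\xi}$ and $\greekvect{\psi}$ in \eqref{eq:psi} with the realized order $r^*\le r$ — the relevant observable window has length $r^*+1$, matching the $r^*+1$ entries of each $\mat{v}_j$ — and one should state explicitly (as above) that $\greekvect{\psi}$ descends to a function on state space and that $\mathbb{H}$ is $\mathcal{K}$-invariant, since these are precisely what entitle $\mathcal{K}$ to be called the Koopman operator rather than merely a matrix reproducing the samples.
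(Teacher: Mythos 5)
Your proposal is correct and follows essentially the same route as the paper: both extract the order-$r^*$ linear recurrence from the null vector $\mat{p}$ of the rearranged Hankel matrix, express each length-$(r^*+1)$ window of observables in the Vandermonde basis $\mat{V}\otimes\mat{I}_m$, and show the one-step shift acts on the coordinates as $\greekvect{\Lambda}=\mathrm{diag}(\rho_j)\otimes\mat{I}_m$, under the same simple-roots assumption the paper makes explicitly. Your added remarks (uniqueness of the normalized null vector, well-definedness of $\greekvect{\psi}$ as a function of $\greekvect{\xi}$ via the Lipschitz constraints, and the confluent-Vandermonde/Jordan fix for repeated roots) are sound refinements of points the paper leaves implicit.
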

\begin{proof} Given in the Appendix.
  \end{proof}
 
% From the derivation above it follows that the eigenfunctions of the Koopman operator have the form $\greekvect{\phi}_j(\greekvect{\xi}) \doteq (\mat{v}_j  \otimes  \mat{I}_m) \mat{c}(\greekvect{\xi})$, where $\mat{c}(.)$ evolves according to the  dynamics
% $\mat{c}(\greekvect{\xi}_{k+1}) = \rho_j \mat{c}(\greekvect{\xi}_k) $.
 % This follows from the fact that in the basis defined by the columns of $\mat{D}$, 
 %$\greekvect{\phi}_j(\greekvect{\xi})$ has the representation:
%\[\greekvect{\Phi}_j(\greekvect{\xi})  = \begin{bmatrix} \mat{0} \ldots &\mat{c}(\greekvect{\xi})^T& \ldots \mat{0} \end{bmatrix}^T\]
%Hence 
%\[ \begin{split}
%\greekvect{\Phi}_j(\greekvect{\xi}_{k+1})& =\greekvect{\Lambda}\greekvect{\Phi}_j(\greekvect{\xi}_{k}) =
 % \begin{bmatrix} \mat{0} \ldots &\rho_j\mat{c}^T(\greekvect{\xi_{k}})& \ldots \mat{0} \end{bmatrix}^T 
  % \doteq  \begin{bmatrix} \mat{0} \ldots &\mat{c}^T(\greekvect{\xi}_{k+1})& \ldots \mat{0} \end{bmatrix}^T 
  %\end{split}
  %\]

Theorem \ref{Teo:theo1}  provides the  foundation for constructing the Koopman operator from the solution of an optimization problem, but is of limited practical value, due to several reasons: (i) It does not indicate how to find  $m$, the dimension of $\mat{y}_k$, or $r$, the ``memory" of the system, and (ii) it leads to a difficult, non-convex problem.  Motivated by \cite{FeiICCV11}, next we show that Problem \ref{prob:feas} is equivalent to a SDP constrained rank-minimization.   The starting point is to consider the Kernel matrix  with entries $\mat{K}_{r,s}\doteq \mat{y}_{r}^T\mat{y}_{s}$, where $\mat{y}_{r}, \mat{y}_{s}$ denote the
observables corresponding to points $\mat{x}_{r},\mat{x}_s$ drawn from (not necessarily the same) training trajectories. Let $\mat{y}_{s}^{(i)}, \; s=1,\ldots T_i$ denote the observables corresponding to the i$^{\rm{th}}$ trajectory and define  the  $(r+1)\times(r+1)$ Gram matrix  $\mat{G}^{(i)} \doteq 
 (\mat{H}^{(r+1)}_{\mat{y}^{(i)}})^T \mat{H}^{(r+1)}_{\mat{y}^{(i)}}$.   The key observation is that both the entries of $\mat{G}^{(i)}$ and the argument of the constraints \eqref{eq:feas2}--\eqref{eq:feas3} are \emph{affine} functions of entries of $\mat{K}$, leading to the following result:

\begin{Theorem}\label{Theo:SDP}   Define the family of Gram matrices:
 $\vect{G}^{(i)}=  (\mat{H}^{(r+1)}_{\mat{y}^{(i)}})^T \mat{H}^{(r+1)}_{\mat{y}^{(i)}}=\sum_{\ell=0}^{T_i-r}\vect{K}_{\ell,r}^{(i)}$ where
 %\beq \label{eq:Kdef}
\[ \vect{K}_{\ell,r}^{(i)}=\begin{bmatrix}
              (\mat{y}_{\ell}^{(i)})^T\mat{y}_{\ell }^{(i)} & (\mat{y}_{\ell}^{(i)})^T\mat{y}_{\ell +1}^{(i)} & \cdots &  (\mat{y}_{\ell}^{(i)})^T\mat{y}_{\ell +r}^{(i)} \\
               % {\vect{y}^T_{i+1}}{\vect{y}_{i}} & {\vect{y}^T_{i+1}}{\vect{y}_{i+1}} & \cdots & {\vect{y}^T_{i+1}}{\vect{y}_{i+n}} \\
                \vdots  & \vdots  & \ddots & \vdots  \\
          (\mat{y}_{\ell+r}^{(i)})^T\mat{y}_{\ell}^{(i)}  &  (\mat{y}_{\ell+r}^{(i)})^T\mat{y}_{\ell+1}^{(i)} & \cdots &  (\mat{y}_{\ell+r}^{(i)})^T\mat{y}_{\ell+r}^{(i)}\\
            \end{bmatrix}
            \]
%\eeq
(note that $\vect{K}_{\ell,r}^{(i)}$ are submatrices of $\mat{K}$). Consider the following rank minimization problem:
\begin{align}
r^*=  &  \min_{\mat{K} \succeq 0} \text{rank}(\vect{G} \doteq \begin{bmatrix} (\vect{G}^{(1)})^T & \ldots & 
 (\vect{G}^{(N)})^T \end{bmatrix}^T )  \;  \text{subject to:} \label{eq:Kc0} \\
  & \left .\begin{array}{l} \frac{1}{ M^2_u(\vect{x}_s)} \| \vect{x}_s - \vect{x}_t\|_2^2 \leq  K_{s,s}-2K_{s,t}+K_{t,t} \\
 K_{s,s}-2K_{s,t}+K_{t,t} \leq M^2_{\ell} (\vect{x}_s)  \| \vect{x}_s - \vect{x}_t\|^2_2  \end{array}  \right \}
 \text{$\forall  (s,t)$ such that  $\|\vect{x}_s - \vect{x}_t\|_2 \leq \delta$} \label{eq:Kc1} \\
 & K_{s,s} = \|\vect{x}_s\|_2^2 \; \text{for all $s \in \mathbb{I}$}  \label{eq:Kc2}
    \end{align}
    Denote by $\mat{K}^{(i)}$ the submatrix of $\mat{K}$ with entries $(\mat{K}^{(i)})_{\ell,j}=(\mat{y}^{(i)}_{\ell})^T\mat{y}^{(i)}_{j}$, and
    let $m= \max_i\left \{\text{rank}(\mat{K}^{(i)})\right \}$. Consider the factorizations $(\mat{Y}^{(i)})^T\mat{Y}^{(i)} = \mat{K}^{(i)}$ with $\mat{Y}^{(i)} \in \mathbb{R}^{m\times T_i}$.
    Then, if  $r* < r+1$, the columns $\mat{y}_k^{(i)}$ of $\mat{Y}^{(i)}$ solve Problem \ref{prob:feas}.  \end{Theorem}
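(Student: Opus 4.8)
The plan is to prove the claim by setting up a correspondence between feasible points $\mat{K}$ of the SDP \eqref{eq:Kc0}--\eqref{eq:Kc2} and admissible observable trajectories of Problem \ref{prob:feas}, under which the SDP objective $\rank(\mat{G})$ equals the Hankel rank $\rank(\mat{H_y}^{(r+1)})$. First I would record the elementary dictionary between the two descriptions. Because $\mat{K}\succeq 0$, it factors as $\mat{K}=\mat{Y}^T\mat{Y}$, and grouping the columns of $\mat{Y}$ by trajectory produces sequences $\mat{y}^{(i)}_k$ with $(\mat{y}^{(i)}_s)^T\mat{y}^{(j)}_t=K_{s,t}$; hence $K_{s,s}-2K_{s,t}+K_{t,t}=\|\mat{y}_s-\mat{y}_t\|_2^2$ and $K_{s,s}=\|\mat{y}_s\|_2^2$, so \eqref{eq:Kc1}--\eqref{eq:Kc2} are exactly the squared forms of the bi-Lipschitz and normalization constraints \eqref{eq:feas2}--\eqref{eq:feas3}. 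The converse is immediate: any admissible $\mat{y}$-trajectory induces, via $K_{s,t}\doteq\mat{y}_s^T\mat{y}_t$, a PSD $\mat{K}$ obeying \eqref{eq:Kc1}--\eqref{eq:Kc2}. Thus the feasible sets of the two problems are identified, and the only substantive thing left is to match the objectives.

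Next I would prove that $\rank(\mat{G})=\rank(\mat{H_y}^{(r+1)})$ whenever $\mat{K}$ and the $\mat{y}$'s correspond as above. Per trajectory, expanding the product entrywise gives $\mat{G}^{(i)}=(\mat{H}^{(r+1)}_{\mat{y}^{(i)}})^T\mat{H}^{(r+1)}_{\mat{y}^{(i)}}$ (which also exhibits $\mat{G}^{(i)}=\sum_\ell\mat{K}^{(i)}_{\ell,r}$ as an affine function of $\mat{K}$), so $\rank(\mat{G}^{(i)})=\rank(\mat{H}^{(r+1)}_{\mat{y}^{(i)}})$ by the identity $\rank(A^TA)=\rank(A)$. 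To lift this to the stacked matrices I would work with right null spaces: for $\mat{p}\in\mathbb{R}^{r+1}$, $\mat{p}\in\mathcal{N}_R(\mat{G})$ iff $\mat{G}^{(i)}\mat{p}=0$ for all $i$ iff $\|\mat{H}^{(r+1)}_{\mat{y}^{(i)}}\mat{p}\|_2^2=\mat{p}^T\mat{G}^{(i)}\mat{p}=0$ for all $i$ iff $\mat{H}^{(r+1)}_{\mat{y}^{(i)}}\mat{p}=0$ for all $i$ iff $\mat{p}\in\mathcal{N}_R(\mat{H_y}^{(r+1)})$. Since both $\mat{G}$ and $\mat{H_y}^{(r+1)}$ have exactly $r+1$ columns, equal null spaces give equal ranks.

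With these two ingredients the conclusion is short. Take $\mat{K}$ optimal in \eqref{eq:Kc0}--\eqref{eq:Kc2} with value $r^*$; factor each diagonal block $\mat{K}^{(i)}\succeq 0$ as $(\mat{Y}^{(i)})^T\mat{Y}^{(i)}$ with $\mat{Y}^{(i)}\in\mathbb{R}^{m\times T_i}$, $m=\max_i\rank(\mat{K}^{(i)})$ (padding rows where $\rank(\mat{K}^{(i)})<m$). Its columns $\mat{y}^{(i)}_k$ satisfy \eqref{eq:feas2}--\eqref{eq:feas3} by the dictionary above, and $\rank(\mat{H_y}^{(r+1)})=\rank(\mat{G})=r^*$ by the previous paragraph; hence if $r^*<r+1$ then $\rank(\mat{H_y}^{(r+1)})\le r$, i.e. \eqref{eq:feas1} holds, so $(\mat{y}^{(i)}_k,r,m)$ solves Problem \ref{prob:feas}. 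When $r^*=r+1$ the Hankel-rank constraint is incompatible with the current $r$, which is exactly why the statement is conditioned on $r^*<r+1$; one then enlarges $r$ and re-solves. Feeding the resulting trajectory into Theorem \ref{Teo:theo1} recovers $\mathcal{D}$, $\mathbb{H}$ and $\mathcal{K}$.

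The step I expect to be the main obstacle is the dimension bookkeeping together with the null-space argument. Passing from a joint Gram factorization of $\mat{K}$ (of width $\rank(\mat{K})$) to the per-trajectory factors of width $m=\max_i\rank(\mat{K}^{(i)})$ automatically preserves within-trajectory inner products, but one must check that the cross-trajectory pairs $(s,t)$ entering \eqref{eq:feas2} are still honored --- which is clean if one instead keeps the single joint factorization (and either verifies $\rank(\mat{K})=\max_i\rank(\mat{K}^{(i)})$ at the optimum or simply reports $m=\rank(\mat{K})$). Everything else --- the Gram expansions, the identity $\rank(A^TA)=\rank(A)$, and rewriting the norm constraints through $\mat{K}$ --- is routine.
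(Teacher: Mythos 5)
Your proposal is correct and follows essentially the same route as the paper's proof: identify the entries of $\mat{K}$ with inner products of the $\mat{y}$'s so that \eqref{eq:Kc1}--\eqref{eq:Kc2} become the squared forms of \eqref{eq:feas2}--\eqref{eq:feas3}, and use $\mat{G}^{(i)}=(\mat{H}^{(r+1)}_{\mat{y}^{(i)}})^T\mat{H}^{(r+1)}_{\mat{y}^{(i)}}$ to transfer the rank bound from $\mat{G}$ to the Hankel matrices. Your null-space argument for the stacked matrices is in fact slightly more careful than the paper's, which only bounds each $\rank(\mat{H}^{(r+1)}_{\mat{y}^{(i)}})$ individually rather than the stacked $\rank(\vect{H_y}^{(r+1)})$ that \eqref{eq:feas1} actually requires, and your caveat about preserving cross-trajectory inner products under per-trajectory factorizations is a legitimate subtlety the paper's proof does not address.
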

\begin{proof} Given in the Appendix \end{proof}
\subsection{Adding a regularization}\label{sec:regularized}
Theorems \ref{Teo:theo1}  indicates how to find  the observables $\greekvect{\psi}(.) \in \mathbb{H}$ by solving a constrained optimization problem. Further, these constraints guarantee that the mapping $\greekvect{\psi}(.)\colon \mathbb{R}^{rn}  \to \mathbb{H}$ locally satisfies some Lipschiz and gain constraints. However these constraints alone do not guarantee that  $\greekvect{\psi}(.)$ is not arbitrarily complex, or even has the same functional form for all $\greekvect{\xi}$. These issues  can complicate the task of finding an explicit form for the mapping, if one is needed.  Next, we briefly indicate how to use additional degrees of freedom available in the problem to guarantee that $\greekvect{\psi}(.)$ is the simplest possible mapping, in a sense precisely defined below, and has the same functional form for all  $\greekvect{\xi}$.

Consider a point $\greekvect{\xi}_k \doteq  \begin{bmatrix} \mat{x}_{k-r+1}^T &\ldots & \mat{x}_{k}^T\end{bmatrix}^T$, and for
each (block) component $\mat{x}_j$, denote by $\mathbb{N}_{\mat{x}_j}$ the indexes of its nearest neighbors.  Let $\mat{K}_{\mat{x}_j}$ be matrix
with elements $(\mat{K}_{\mat{x}_j})_{r,s} = \mat{x}^T_r\mat{x}_s$ for all $r,s \in \left \{ j\cup \mathbb{N}_{\mat{x}_j} \right \}$. Similarly, given 
$\greekvect{\psi}(\greekvect{\xi}_k) \doteq \begin{bmatrix}\mat{y}_{k-r+1} \ldots \mat{y}_{k}\end{bmatrix}^T$, let $\mat{K}_{\mat{y}_j}$ be the submatrix of $\mat{K}$  with elements $(\mat{K}_{\mat{y}_j})_{r,s} = \mat{y}^T_r\mat{y}_s$ for all $r,s \in \left \{ j\cup \mathbb{N}_{\mat{x}_j} \right \}$.  
 For ease of notation, let
  $\greekvect{\kappa}_{\mat{x}}^{(j)} \doteq  \textbf{svec$(\mat{K}_{\mat{x}_j}) $} \in  \mathbb{R}^q $, 
   $\greekvect{\kappa}_{\mat{y}}^{(j)} \doteq \textbf{svec$(\mat{K}_{\mat{y}_j})$} \in  \mathbb{R}^q$, where  
   $q \doteq \frac{(|\mathbb{N}_{\mat{x}_j}|+1 )(|\mathbb{N}_{\mat{x}_j}| +2)}{2}$.  Note that these vectors contain the unique elements of the matrices 
   $\mat{K}_{\mat{x}_j}$, $\mat{K}_{\mat{x}_j}$. Finally, let $p = \lfloor \frac{q}{2} \rfloor$ and 
define the Loewner matrix
\beq \label{eq:Loewner1}
\mat{L}_{\mat{x}_j} \doteq 
 \begin{bmatrix}
 \frac{\greekvect{\kappa}_{\mat{y}_1}-\greekvect{\kappa}_{\mat{y}_{p+1}}}{\greekvect{\kappa}_{\mat{x}_1}-\greekvect{\kappa}_{\mat{x}_{p+1}}}&
  \frac{\greekvect{\kappa}_{\mat{x}_1}-\greekvect{\kappa}_{\mat{x}_{p+2}}}{\greekvect{\kappa}_{\mat{y}_1}-\greekvect{\kappa}_{\mat{x}_{p+2}}}
 & \ldots & \frac{\greekvect{\kappa}_{\mat{y}_1}-\greekvect{\kappa}_{\mat{y}_{q}}}{\greekvect{\kappa}_{\mat{x}_1}-\greekvect{\kappa}_{\mat{x}_{q}}} \\
% \frac{y_2-y_{n+1}}{x_2-x_{n+1}}  & \frac{y_2-y_{n+2}}{x_2-x_{n+2}} & \ldots &\frac{y_2-y_{2n}}{x_2-x_{2n}} 
 \vdots & \vdots & \ddots & \vdots \\
  \frac{\greekvect{\kappa}_{\mat{y}_{p}}-\greekvect{\kappa}_{\mat{y}_{p+1}}}{\greekvect{\kappa}_{\mat{x}_{p}}-\greekvect{\kappa}_{\mat{x}_{p+1}}}&
 % \frac{\greekvect{\kappa}_{\mat{x}_n}-\greekvect{\kappa}_{\mat{x}_{n+2}}}{\greekvect{\kappa}_{\mat{y}_n}-\greekvect{\kappa}_{\mat{y}_{n+2}}}
 & \ldots & \frac{\greekvect{\kappa}_{\mat{y}_{p}}-\greekvect{\kappa}_{\mat{y}_{q}}}{\greekvect{\kappa}_{\mat{x}_{p}}-\greekvect{\kappa}_{\mat{x}_{q}}} \\
% \frac{y_n-y_{n+1}}{x_n-x_{n+1}}  & \frac{y_n-y_{n+2}}{x_n-x_{n+2}} & \ldots &\frac{y_n-y_{2n}}{x_n-x_{2n}} 
  \end{bmatrix}   \eeq
  where $\greekvect{\kappa}_{\mat{x}_i}, \greekvect{\kappa}_{\mat{y}_i}$ denote the $i^{\rm{th}}$ component of $\greekvect{\kappa}_{\mat{x}}^{(j)}$ and   $\greekvect{\kappa}_{\mat{y}}^{(j)}$ respectively. From the results in section \ref{sec:Loewner}, it follow that if rank$(\mat{L}_{\mat{x}_j}) < p$, then there exists a rational mapping of degree up to $p-1$ that maps the elements of $\mat{K}_{\mat{x}_j}$ to those of $\mat{K}_{\mat{y}_j}$. Further, the degree of this mapping can be minimized by minimizing the rank  of $\mat{L}_{\mat{x}_j}$ with respect to the variables $\greekvect{\kappa}_{\mat{y}_i}$, leading (locally) to the lowest order rational mapping $\greekvect{\psi}(\greekvect{\xi}_k)$. If a global, rather than local, rational mapping is desired, a similar idea can be using involving all pairs $\mat{x},\mat{y}$, rather than just the nearest neighbors of each point.  
  %Since  $\mat{L}_{\mat{x}_j}$
  %depends affinely on these variables, minimizing its rank can be handled by simply adding terms of the form $ 

\subsection{A Convex Relaxation}\label{sec:relaxation}

Theorem \ref{Theo:SDP} allows for reducing Problem \ref{prob:problem1} to a constrained rank minimization problem. However, this problem is still NP-hard. In order to obtain a tractable relaxation,  we will replace the objective \eqref{eq:Kc0} by  $\sum_{i=1}^N \text{rank}(\vect{G}^{(i)})$ and add a term of the form $\lambda_1 \sum_{j=1}^T \text{rank}(\vect{L}_{\mat{x}_j})$, where $T=\sum T_i$ is the total number of points.
Then, proceeding as in \cite{MohanFazel},  we will replace rank with a convex surrogate, a weighted nuclear norm, where the weights are updates as each step of the algorithm. Finally, in order to handle  outliers, we will consider a ``soft" version of  \eqref{eq:Kc1}-\eqref{eq:Kc2}, where these are added to the objective as penalties. The complete algorithm is outlined in Algorithm \ref{alg:algorithm2}. It is worth noting that if the algorithm yields a solution $\mat{G}$ with rank$(\mat{G}) < r$, this certifies that $m$ is indeed the Koopman operator. On the other hand, if the algorithm yields a solution $\mat{G}$ with minimum singular value $\sigma_{\text{min}}$, then an $r^{th}$ order approximate model $m_r$ can be obtained by performing PCA on
$\mat{G}$. In this case the approximation error is bounded (in the Hankel norm sense) by $\sqrt{\sigma_{\text{min}}}$.

\begin{algorithm}[hbt]
    \caption{Reweighted $\|.\|_*$ based Koopman Identification} \label{alg:algorithm2}
    \begin{algorithmic}[1]
        \State \textbf{initialize:} \begin{tabular}{l}$iter = 0,  \mat{W}_0 = \mat{I};  \mat{V}_0^{(j)} = \mat{I}, j=1,\ldots,T$;        $\lambda_1,\lambda_2, \lambda_3, \delta \gets$ hyperparameters, \\ $\mathbb{NN}=\left \{ (s,t) \colon \|\vect{x}_s - \vect{x}_t\|_2 \leq \delta \right \}$,  $\sigma \gets $small number, $r \gets$ upper bound on system order.

       \end{tabular}
        \State \textbf{Repeat:}  Solve
        \[ \begin{array}{l}
        \min_{\mat{K}^{(i)} \succeq 0}\quad  \|\mat{W}_{iter}\mat{G}\|_* + \lambda_1  \sum_{j=1}^T \|\mat{V}^{(j)}_{iter}\mat{L}_{\mat{x}_j}\|_* + 
        \lambda_2  \sum_{s \in \mathbb{I}} (K_{s,s} - \|\vect{x}_s\|_2^2)^2 \\
        + \lambda_3 \sum_{{r,t} \in \mathbb{NN}} \max \left \{0,\frac{1}{ M^2_u(\vect{x}_s)} \| \vect{x}_s - \vect{x}_t\|^2_2 - K_{s,s}+2K_{s,t}-K_{t,t} \right \} \\
        + \lambda_3 \sum_{{r,t} \in \mathbb{NN}}  \max \left \{0, K_{s,s}-2K_{s,t}+K_{t,t} - M^2_{\ell} (\vect{x}_s)  \| \vect{x}_s - \vect{x}_t\|^2_2  
\right \} 
     \end{array} \]
   %&& \vect{K}\succeq 0 \\
  %&&  \sum k_{ij} =0 %\nonumber
                      Update
        \[ \begin{array}{l}
        \mat{W}_{(iter+1)} = \left ( \frac{ \mat{G} + \sigma\mat{I}}{\|\mat{G} + \sigma\mat{I}\|} \right )^{-1}, \;
        \mat{V}^{(j)}_{(iter+1)} = \left ( \frac{ \mat{L}_{\mat{x}_j} + \sigma\mat{I}}{\|\mat{L}_{\mat{x}_j} + \sigma\mat{I}\|} \right )^{-1}, \;
        iter = iter + 1
        \end{array}\]
        \State \textbf{Until:} rank$(\mat{G}) < r$.
        \State $[\mat{U}^{(i)},\mat{S}^{(i)},(\mat{U}^{(i)})^T] \gets \text{svd}(\mat{K}^{(i)})$, $\mat{S}^{(i)}\gets \frac{\mat{S}^{(i)}}{\|\mat{S}^{(i)}\|}$, $r_k^{(i)}\gets \min r \colon \sum_{j=1}^r \mat{S}_{jj}^{(i)} \geq 0.99$
        \State $\mat{Y}^{(i)} \gets [\mat{U}^{(i)}(:,1:r_k)]^T$ 
 \State       $[\mat{U}_{\mat{G}},\mat{R},\mat{V}_\mat{G}^T] \gets \text{svd}(\mat{G})$,  $\mat{m} \gets \mat{V}_\mat{G}(:,r+1)$ 
 \State \textbf{Output:} embeddings $\mat{Y^{(i)}}$, model $\mat{m}$.
   \end{algorithmic}
\end{algorithm}

\subsection{Mapping observables to states} \label{sec:mapping}

The approach presented in Section \ref{sec:rank} finds the observables $\greekvect{\psi}(\greekvect{\xi}_k)$ corresponding to a given trajectory $\greekvect{\xi}_k, \; k=1,\ldots T$. However, it does not explicitly provide a method for mapping  a  given $\greekvect{\psi}(\greekvect{\xi})$, obtained for instance by using the Koopman operator to propagate a trajectory in observable space, back to the corresponding point  $\greekvect{\xi}$ in state space. Motivated by \cite{SaulLLE} we propose to find (pointwise) the mapping
$\greekvect{\psi} \to \greekvect{\xi}$ by locally approximating the mapping between the embedded space and ambient space kernels,  $\mat{K}_\mat{y}$ and $\mat{K}_\mat{x}$, with a rational function.  Specifically, given a point $\mat{y}^* \in \mathbb{R}^m$, 
 let $\mathcal{N}_{\mat{y^*}} \doteq \left \{ \mat{y}_k \colon \|\mat{y}^*-\mat{y}_k\|_2^2 \leq \delta \right \}$ and denote by $\mathcal{X}$ its preimage.  We propose to estimate $\mat{x}^*$ by first finding ${K}_{\mat{x}_i,\mat{x}^*}$, the elements of $\mat{K}_x$  corresponding to $\mat{x}_i^T\mat{x}^*, \; \forall \mat{x}_i \in \mathcal{X}$  and then finding $\mat{x}^*$ by factorizing  $\mat{K}_x$.  Note that, in order to get a valid kernel compatible with the priors, the elements  ${K}_{\mat{x}_i,\mat{x}^*}$ should be such that the completed matrix $\mat{K}_{\mat{x}} \succeq 0$, rank$(\mat{K}) \leq n$, and the constraints  \eqref{eq:feas2}-\eqref{eq:feas3} are satisfied.  As shown next, under the assumption that the mapping $G\colon \mat{K}_\mat{y} \to \mat{K}_\mat{x}$ is rational, then $\mat{x}^*$ can be found by solving a rank minimization problem subject to semi-definite constraints.
 
% Assume that\footnote{For notational simplicity we have assumed here that $p$ is even.} $ |\mathcal{X}|=p$, and 
%Let  $q \doteq \frac{(|\mathcal{X}|+1 )(|\mathcal{X}| +2)}{2}$ and 
 Consider the  Kernel matrices $\mat{K}_{\mat{x}}, \mat{K}_{\mat{y}} \in \mathbb{R}^{(|\mathcal{X}|+1 )\times (|\mathcal{X}|+1 )}$, where the entries have been ordered so that the elements of the form  $\mat{y}_i^T\mat{y}^*$ and $\mat{x}_i^T\mat{x}^*$ appear in the first row and column. As before, for ease of notation, let
  $\greekvect{\kappa}_\mat{x} = \textbf{svec$(\mat{K}_\mat{x}) $}$, 
   $\greekvect{\kappa}_\mat{y} = \textbf{svec$(\mat{K}_\mat{y})$}$.  Note that $\greekvect{\kappa}_\mat{x},\greekvect{\kappa}_\mat{y} \in \mathbb{R}^q$, with 
   $q \doteq \frac{(|\mathcal{X}|+1 )(|\mathcal{X}| +2)}{2}$, and that all inner products involving $\mat{y}^*$ and $\mat{x}^*$ appear in the first $|\mathcal{X}|+1$ elements of  $\greekvect{\kappa}_\mat{y}$ and
    $\greekvect{\kappa}_\mat{x}$. Let $p=\lfloor\frac{q}{2\rfloor}$ and consider 
the following rank minimization problem:
 
 %\begin{eqnarray}
\begin{align}  \label{eq:SDPx}
&\min_{\greekvect{\kappa}_\mat{x}} \text{rank $(\mat{L})$}  \text{subject to:} \\
 & \left .\begin{tabular}{l}
$\kappa_{\mat{x}_1} -2 \greekvect{\kappa}_{\mat{x}_i}+ \greekvect{\kappa}_{\mat{x}_j}
  \leq \delta^2$ %\label{eq:SDPx1} 
  \\
$\greekvect{\kappa_{\mat{x}_1}} -2 \greekvect{\kappa}_{\mat{x}_i}+ \greekvect{\kappa}_{\mat{x}_j}
  \leq M_u^2 (\vect{x}_i)( \kappa_{\mat{y}_1} -2 \greekvect{\kappa}_{\mat{y}_i}+ \greekvect{\kappa}_{\mat{y}_j})$
  % \label{eq:SDPx2}
   \\
 $\greekvect{\kappa_{\mat{y}_1}} -2 \greekvect{\kappa}_{\mat{y}_i}+ \greekvect{\kappa}_{\mat{y}_j}
  \leq M_{\ell}^2 (\vect{x}_i)( \kappa_{\mat{x}_1} -2 \greekvect{\kappa}_{\mat{x}_i}+ \greekvect{\kappa}_{\mat{x}_j})$
  % \label{eq:SDPx3}  
  \\
  \end{tabular} \right \}  \begin{tabular}{l}
  $i=2,\ldots,|\mathcal{X}|+1$ \\
  $ j=\frac{(2|\mathcal{X}| +4-i)(i-1)}{2}+1$
  \end{tabular} 
   \label{eq:SDPx4}
 %& \|\mat{x}_i - \mat{x}^*\|_2^2 \leq \delta
%  \end{eqnarray}
\end{align}
\begin{align}
& \mat{K}_{\mat{x}} \doteq \textbf{smat}(\greekvect{\kappa}_{\mat{x}}) \succeq 0, \;  
 \text{rank}(\mat{K}_{\mat{x}}) \leq n \label{eq:SDP0},  \greekvect{\kappa}_{\mat{x}_1}= \greekvect{\kappa}_{\mat{y}_1}  \\
& \mat{L} = 
 \begin{bmatrix}
 \frac{\greekvect{\kappa}_{\mat{x}_1}-\greekvect{\kappa}_{\mat{x}_{p+1}}}{\greekvect{\kappa}_{\mat{y}_1}-\greekvect{\kappa}_{\mat{y}_{p+1}}}&
  \frac{\greekvect{\kappa}_{\mat{x}_1}-\greekvect{\kappa}_{\mat{x}_{p+2}}}{\greekvect{\kappa}_{\mat{y}_1}-\greekvect{\kappa}_{\mat{y}_{p+2}}}
 & \ldots & \frac{\greekvect{\kappa}_{\mat{x}_1}-\greekvect{\kappa}_{\mat{x}_{q}}}{\greekvect{\kappa}_{\mat{y}_1}-\greekvect{\kappa}_{\mat{y}_{q}}} \\
% \frac{y_2-y_{n+1}}{x_2-x_{n+1}}  & \frac{y_2-y_{n+2}}{x_2-x_{n+2}} & \ldots &\frac{y_2-y_{2n}}{x_2-x_{2n}} 
 \vdots & \vdots & \ddots & \vdots \\
  \frac{\greekvect{\kappa}_{\mat{x}_{p}}-\greekvect{\kappa}_{\mat{x}_{p+1}}}{\greekvect{\kappa}_{\mat{y}_{p}}-\greekvect{\kappa}_{\mat{y}_{p+1}}}&
 % \frac{\greekvect{\kappa}_{\mat{x}_n}-\greekvect{\kappa}_{\mat{x}_{n+2}}}{\greekvect{\kappa}_{\mat{y}_n}-\greekvect{\kappa}_{\mat{y}_{n+2}}}
 & \ldots & \frac{\greekvect{\kappa}_{\mat{x}_{p}}-\greekvect{\kappa}_{\mat{x}_{q}}}{\greekvect{\kappa}_{\mat{y}_{p}}-\greekvect{\kappa}_{\mat{y}_{q}}} \\
% \frac{y_n-y_{n+1}}{x_n-x_{n+1}}  & \frac{y_n-y_{n+2}}{x_n-x_{n+2}} & \ldots &\frac{y_n-y_{2n}}{x_n-x_{2n}} 
  \end{bmatrix}   \nonumber
  \end{align}
 
 \begin{Theorem}\label{teo:lowner} Let $\greekvect{\kappa}^*_\mat{x}, \mat{L}^*$ denote the solution to \eqref{eq:SDPx}-\eqref{eq:SDP0}. If rank$(\mat{L}^*) < p$, then
 (i) there exist a rational function $g(.)$ of degree at most $p$ such that $g(\kappa_{\mat{y}_i})=\kappa_{\mat{x}_i}$;  and  (ii) the vector $\mat{x}^*$ defined by the first row of $\mat{X}$, where $ \mat{X}^T\mat{X}=\mat{K}_{\mat{x}}$ satisfies constraints \eqref{eq:feas2}-\eqref{eq:feas3} in Problem \ref{prob:feas}.
 \end{Theorem}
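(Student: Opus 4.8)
The plan is to prove (i) by a direct appeal to the rational interpolation result recalled in Section~\ref{sec:Loewner}, and (ii) by unwinding the kernel and \textbf{svec} bookkeeping to show that the semidefinite constraints \eqref{eq:SDPx4}--\eqref{eq:SDP0} are exactly the constraints \eqref{eq:feas2}--\eqref{eq:feas3} rewritten in terms of inner products.

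For part (i), I would observe that the matrix $\mat{L}$ appearing in \eqref{eq:SDPx}--\eqref{eq:SDP0} is precisely the Loewner matrix \eqref{eq:L2} built from the $q$ scalar pairs $(\greekvect{\kappa}_{\mat{y}_i},\greekvect{\kappa}_{\mat{x}_i})$, with $\greekvect{\kappa}_{\mat{y}_i}$ playing the role of the abscissa and $\greekvect{\kappa}_{\mat{x}_i}$ the role of the ordinate, partitioned into the first $p$ and last $q-p$ indices as in \eqref{eq:L2}. Hence, by the result of \cite{Antoulas:1986,Ionita:2013} stated in Section~\ref{sec:Loewner}, $\mathrm{rank}(\mat{L}^*)<p$ (i.e.\ $\mathrm{rank}(\mat{L}^*)\le p-1$) is equivalent to the existence of a rational function $g$ of order at most $p$ with $g(\greekvect{\kappa}_{\mat{y}_i})=\greekvect{\kappa}_{\mat{x}_i}$ for all $i$, which is claim (i) (when $q$ is odd one applies the cited result to the first $2p$ pairs).

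For part (ii), I would first use \eqref{eq:SDP0}: since $\mat{K}_{\mat{x}}=\textbf{smat}(\greekvect{\kappa}^*_{\mat{x}})\succeq 0$ and $\mathrm{rank}(\mat{K}_{\mat{x}})\le n$, there is a factorization $\mat{K}_{\mat{x}}=\mat{X}^T\mat{X}$ with $\mat{X}\in\mathbb{R}^{n\times(|\mathcal{X}|+1)}$; extract $\mat{x}^*$ as the factor indexed by the $\mat{x}^*$-entry and note that, because the sub-block of $\mat{K}_{\mat{x}}$ not touching that index is held equal to the true Gram matrix of the known samples $\{\mat{x}_i\}_{i\in\mathcal{X}}$, the remaining factors coincide with those points up to a common isometry, so every distance $\|\mat{x}^*-\mat{x}_i\|_2$ is well defined. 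Next I would invoke the ordering convention (all inner products involving $\mat{x}^*$, resp.\ $\mat{y}^*$, sit in the first row/column of $\mat{K}_{\mat{x}}$, resp.\ $\mat{K}_{\mat{y}}$) together with the closed-form index $j=\frac{(2|\mathcal{X}|+4-i)(i-1)}{2}+1$, which is exactly the position of the diagonal entry $(\mat{K}_{\mat{x}})_{i,i}$ in the column-wise \textbf{svec}; this gives $\greekvect{\kappa}_{\mat{x}_1}=\|\mat{x}^*\|_2^2$, $\greekvect{\kappa}_{\mat{x}_i}=\mat{x}^{*T}\mat{x}_{i-1}$, $\greekvect{\kappa}_{\mat{x}_j}=\|\mat{x}_{i-1}\|_2^2$, and likewise for $\mat{y}$. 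Substituting into \eqref{eq:SDPx4} via the identity $\|\mat{a}-\mat{b}\|_2^2=\mat{a}^T\mat{a}-2\mat{a}^T\mat{b}+\mat{b}^T\mat{b}$ turns the three inequalities into $\|\mat{x}^*-\mat{x}_{i-1}\|_2^2\le\delta^2$, $\|\mat{x}^*-\mat{x}_{i-1}\|_2^2\le M_u^2(\mat{x}_{i-1})\|\mat{y}^*-\mat{y}_{i-1}\|_2^2$ and $\|\mat{y}^*-\mat{y}_{i-1}\|_2^2\le M_\ell^2(\mat{x}_{i-1})\|\mat{x}^*-\mat{x}_{i-1}\|_2^2$, i.e.\ \eqref{eq:feas2} for every pair involving $\mat{x}^*$ and a neighbor; the remaining pairs inherit \eqref{eq:feas2} from the solution of the first SDP, and $\greekvect{\kappa}_{\mat{x}_1}=\greekvect{\kappa}_{\mat{y}_1}$ yields $\|\mat{x}^*\|_2=\|\mat{y}^*\|_2$, which is \eqref{eq:feas3} for the new point. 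This establishes claim (ii).

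The main obstacle I anticipate is the indexing and factorization bookkeeping rather than any conceptual difficulty: one must check carefully that the formula for $j$ really picks out the diagonal term under the adopted \textbf{svec} convention, that the ``first row/column'' ordering places each $\mat{x}^*$- and $\mat{y}^*$-inner product exactly where \eqref{eq:SDPx4} expects it, and that the sub-block of $\mat{K}_{\mat{x}}$ associated with the training samples is pinned to their true Gram matrix so that the factor $\mat{X}$ reproduces a geometrically consistent configuration. Once these are nailed down, parts (i) and (ii) reduce to the cited Loewner theorem and the elementary kernel identity, respectively.
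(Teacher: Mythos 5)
Your proposal is correct and follows essentially the same route as the paper: part (i) is a direct appeal to the Loewner-matrix rational interpolation result of Section~\ref{sec:Loewner}, and part (ii) factors the completed PSD kernel $\mat{K}_{\mat{x}}=\mat{X}^T\mat{X}$ and unwinds the \textbf{svec} indexing to show that \eqref{eq:SDPx4} and $\greekvect{\kappa}_{\mat{x}_1}=\greekvect{\kappa}_{\mat{y}_1}$ are exactly \eqref{eq:feas2}--\eqref{eq:feas3} written in inner-product form. Your added care about the index formula for $j$, the isometry ambiguity in the factorization, and the odd-$q$ case only makes explicit details the paper leaves implicit.
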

 \begin{proof} Given in the Appendix \end{proof}
 
 Relaxing the rank in \eqref{eq:SDPx}  and \eqref{eq:SDP0} to a weighed nuclear norm, leads to an algorithm  similar  to Algorithm \ref{alg:algorithm2}, based on solving a sequence of SDPs until  rank deficient matrices $\mat{L}, \mat{K_x}$ are obtained.

 \section{Illustrative Examples} \label{sec:examples}
 
\noindent \textbf{Example 1: Lorentz Attractor.} In this example we consider the  Lorentz chaotic system:
\beq \label{eq:Lorentz} \begin{aligned}
 \dot{x}_1  =  \sigma(x_2 - x_1);
 \dot{x}_2  = x_1(\rho - x_3) - x_2;
  \dot{x}_3  = x_1x_2 - \beta x_3;
 \end{aligned}
 \eeq
with parameters $\sigma=28, \rho=10,\beta=\frac{8}{3}$. We used 400 points of the trajectory  starting at   $[-10.38 \; -4.5366 \; 35.1640]^T$,  uniformly sampled every $0.0271$ seconds to find the embeddings, and matlab's command {\tt ssest} to estimate an 7$^{\rm{th}}$ order model. Fig \ref{fig:Lorentz1}(a) shows the training and one step ahead reconstructed data, that is the results of  applying the encoder/decoder illustrated on the top of Fig. \ref{fig:Koopman} to (i) train, (ii) project the training data,  (iii) perform a one step ahead prediction and  (iv) lift back.  Figure \ref{fig:Lorentz1}(b) shows the predictions obtained using the pipeline at the bottom of Fig. \ref{fig:Koopman}, for points not part of the training data. As shown there, the proposed pipeline is indeed able to predict with reasonable accuracy the one step ahead value of the trajectory, using a 7$^{th}$ order Koopman operator. For comparison, \cite{Havok} uses a 14$^{th}$ order model.
 
 \begin{figure}[h]
 \begin{center}
 \includegraphics[width=2.3 in]{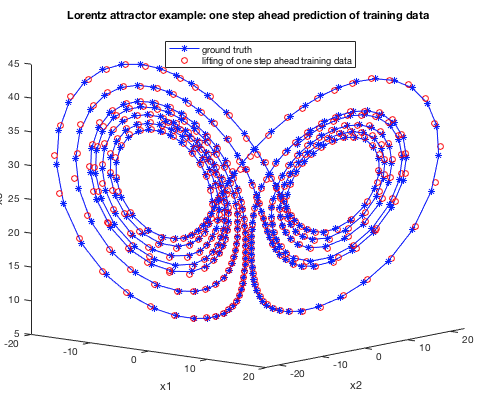}
\includegraphics[width=2.3 in]{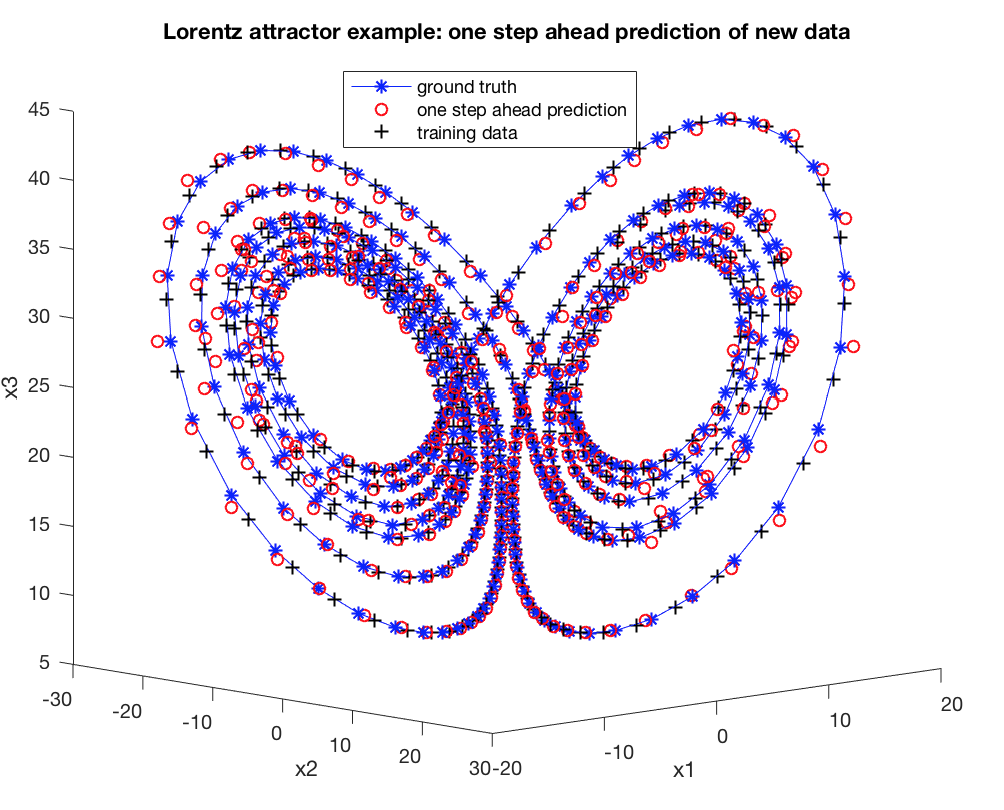}
\caption{\small Lorentz attractor: Left: one step ahead prediction of training data. Right: one step ahead prediction of new data}
\label{fig:Lorentz1}
\end{center}
\end{figure}

 \noindent \textbf{Example 2: The Duffing Oscillator.} Here we consider the system\footnote{The conventional Duffing equation is a forced oscillator. Here we use the last two equations to generate the forcing term $sin(t)$.}: 
 \beq \label{eq:duffing} \begin{aligned}
 \dot{x}_1 & = x_2;   \dot{x}_2 & = -0.5x_2 -x_1 - x_1^3 + 0.42 x_3; \;
 \dot{x}_3 &= x_4; \;
 \dot{x}_4 & =-x_3 
 \end{aligned}
 \eeq

 In this case,  Algorithm 1 yielded an embedding $\mat{y} \in \mathbb{R}^3$. We then used matlab's command {\tt ssest} to estimate a second order model for each component of $\mat{y}$.   Fig \ref{fig:duffing} (left) shows the one step ahead prediction of the training  data.  The right panel in Fig. \ref{fig:duffing} shows the predictions obtained using the pipeline at the bottom of Fig. \ref{fig:Koopman}, for points not part of the training data. As before, the proposed pipeline successfully   predicts the next point in the trajectory. 
 \begin{figure}[h]
 \begin{center}
 \includegraphics[width=2.0 in]{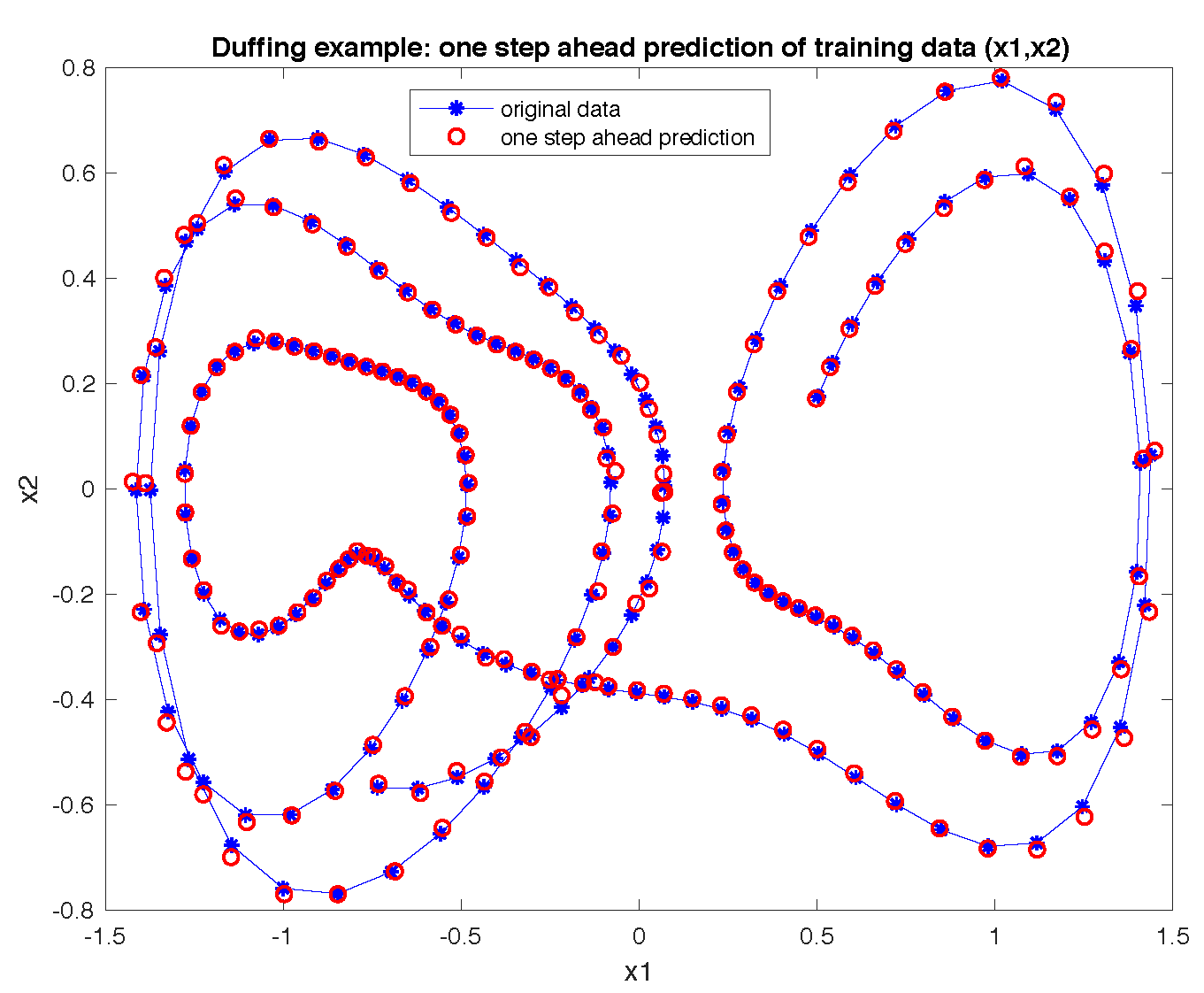}
\includegraphics[width=2.0 in]{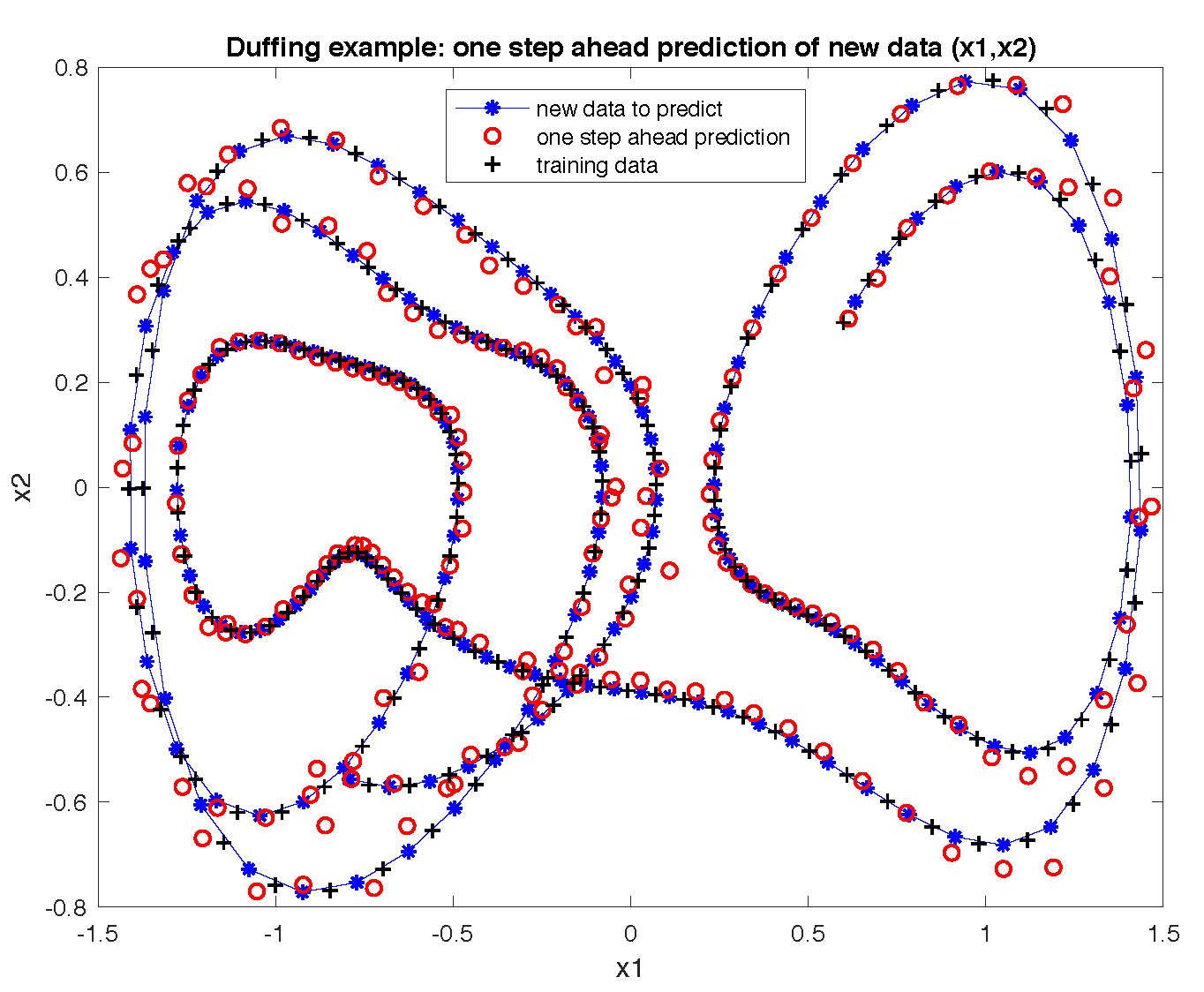}

\caption{\small Duffing oscillator one step ahead predictions of (left) training data and (right) new data.}\label{fig:duffing}
\end{center}
\end{figure}

\noindent \textbf{Example 3: predator-prey model.} In this example we considered the predator-prey model:\beq \label{eq:pp} \begin{aligned}
 \dot{x}_1 & = - x_1 + x_1x_2 \\
 \dot{x}_2 & = x_2 -x_1x_2
 \end{aligned}
 \eeq
 We used 120 points from the trajectory in ambient space to find the embeddings, and matlab's command {\tt ssest} to estimate an 8$^{\rm{th}}$ order model. Fig \ref{fig:pp}(a) shows the training and reconstructed data, that is the results of applying back to back the encoder/decoder illustrated on the top of Fig. \ref{fig:Koopman}.  Figure \ref{fig:pp}(b) shows the predictions obtained using the pipeline at the bottom of Fig. \ref{fig:Koopman}, starting from an initial condition not part of the training data. As shown there, the proposed pipeline is indeed able to predict with reasonable accuracy the trajectory over an 80 steps horizon that encompasses all regions visited by the trajectory.
 \vskip -1em
% \begin{wrapfigure}[13]{r}{4in}
\begin{figure}[h]
\begin{center}
\begin{tabular}{cc}
\includegraphics[width=2 in]{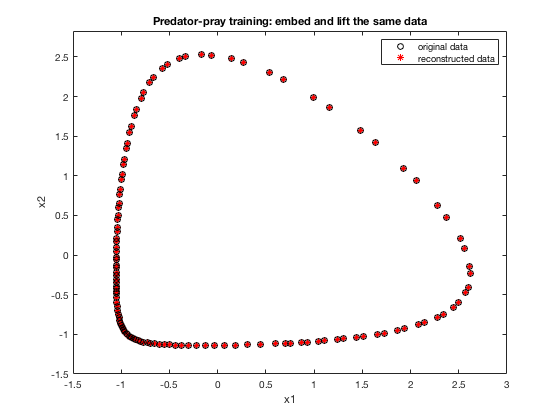}& \includegraphics[width=2 in]{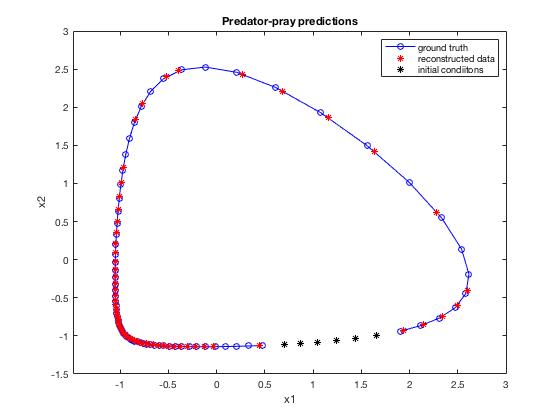}
\end{tabular}\caption{\small Predator-prey example. Left: encoding (black) and decoding (red) the training data. Right: one step ahead state prediction over an 80 step horizon, blue: ground truth, red: predictions using the proposed pipeline, black: initial conditions.}
\label{fig:pp}
\end{center}
\end{figure}
%\end{wrapfigure}

 \begin{wrapfigure}[11]{r}{2.1in}
 \vskip -2em
 \includegraphics[width=2.1in]{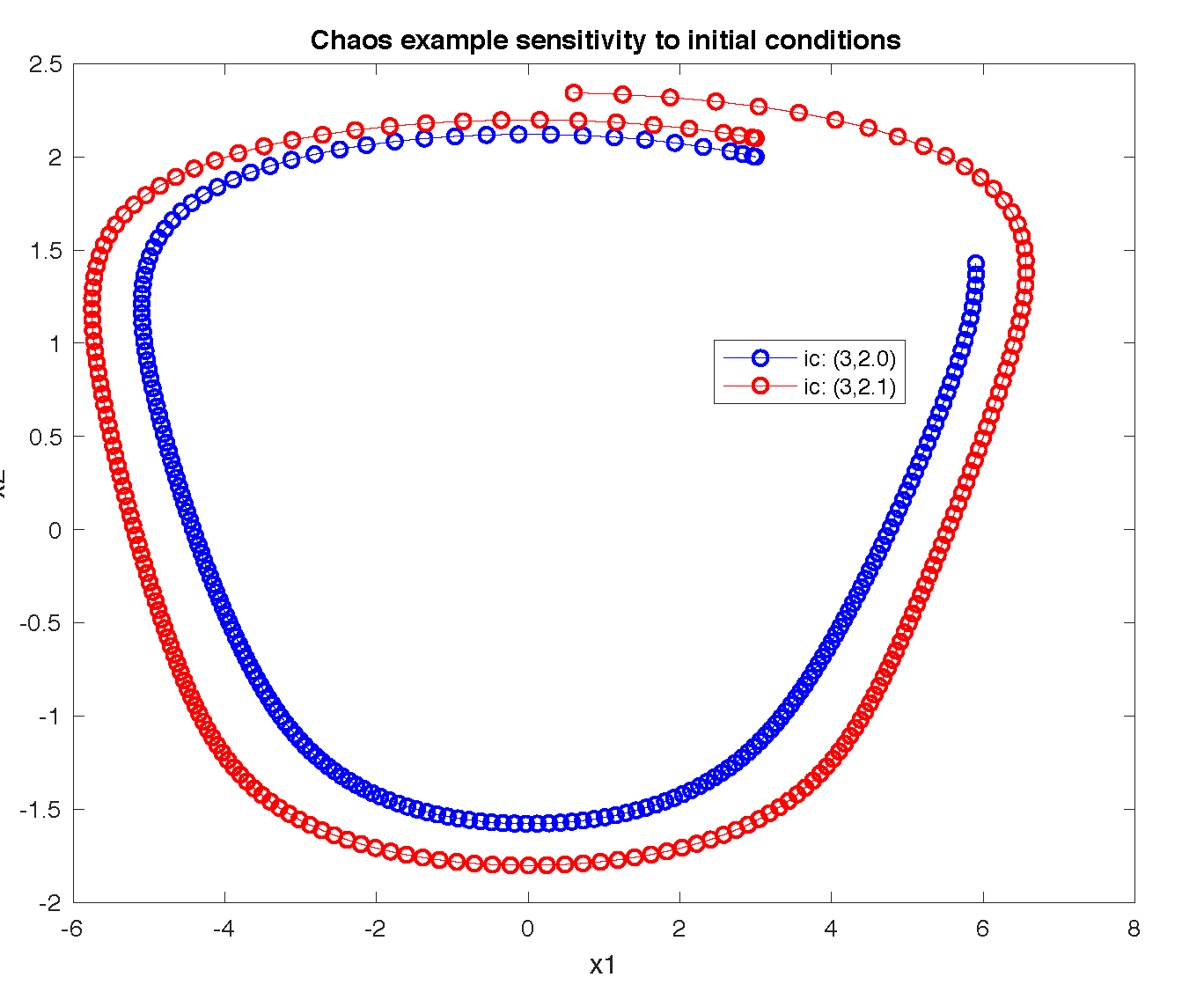}\caption{\small Two different trajectories corresponding to close initial conditions.}
\label{fig:chaos1}
\end{wrapfigure}
 \noindent \textbf{Example 4: Another chaotic system.} Here we consider the system: 
 \beq \label{eq:chaos} \begin{aligned}
 \dot{x}_1 & = x_2 \\
 \dot{x}_2 & = -x_1^5 -0.1x_2 +x_3 \\
 \dot{x}_3 &= x_4  \\
 \dot{x}_4 & =-x_3 
 \end{aligned}
 \eeq
 It is well know that this system has extreme sensitivity to initial condition. This effect is illustrated in Fig \ref{fig:chaos1} showing two different trajectories corresponding to the initial conditions $[2.0,3.0,0.0,6.0]$ and $[2.1,3.0,0.0,6.0]$.  We used 100 points from the trajectory in ambient space to find the embeddings. In this case Algorithm 1 yielded an embedding $\mat{y} \in \mathbb{R}^3$. We then used matlab's command {\tt ssest} to estimate a second order model for each component of $\mat{y}$.   Fig \ref{fig:chaos2}(left) shows the training and reconstructed data.  The right panel in Figure \ref{fig:chaos2} shows the predictions obtained using the pipeline at the bottom of Fig. \ref{fig:Koopman}, starting from an initial condition not part of the training data. As before, the proposed pipeline successfully   predicts with reasonable accuracy the trajectory over a 100 steps horizon, in spite of the sensitivity of the system to initial conditions noted above.
 
 \begin{figure}[h]
 \begin{center}
 \includegraphics[width=2.1 in]{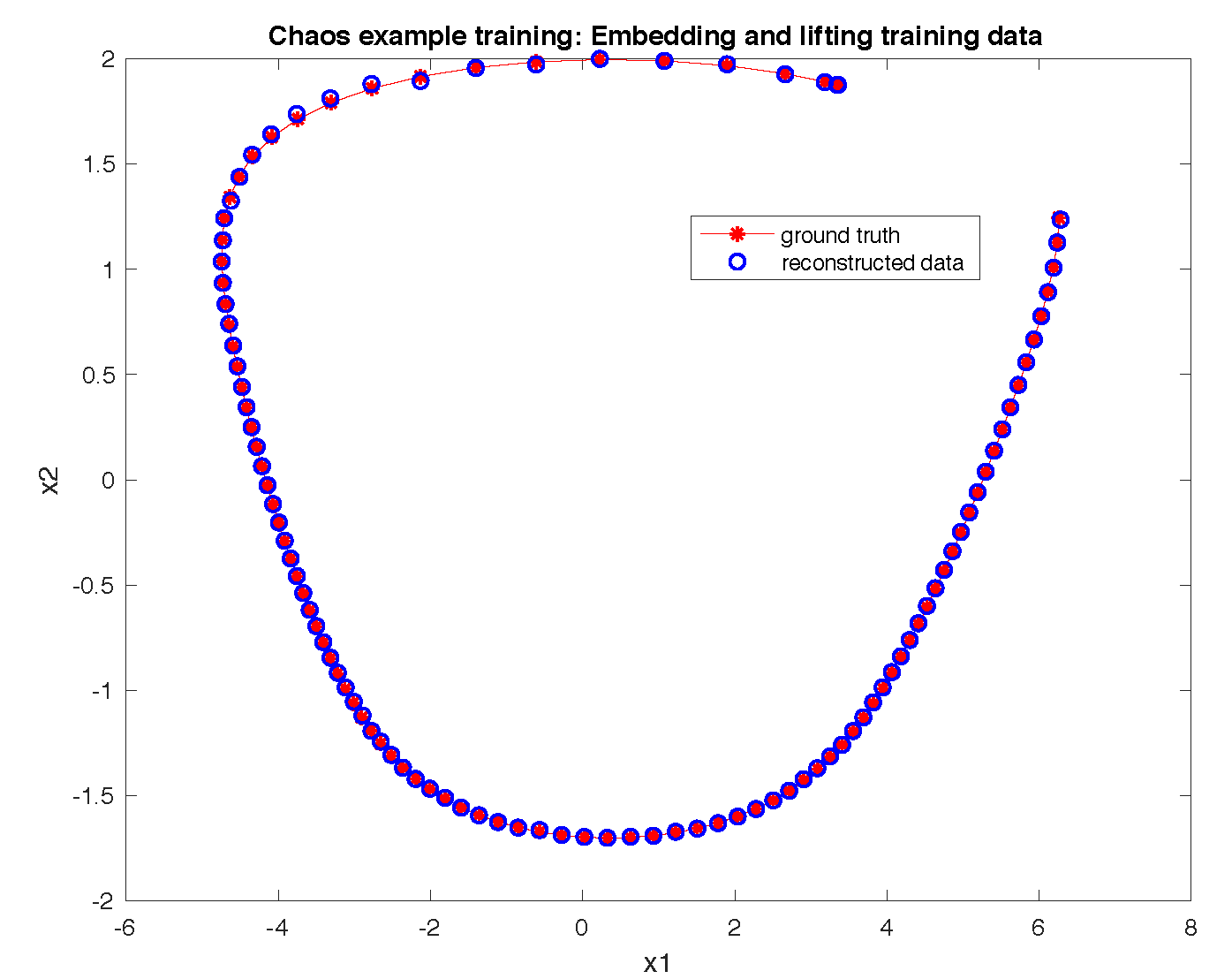}
\includegraphics[width=2.1 in]{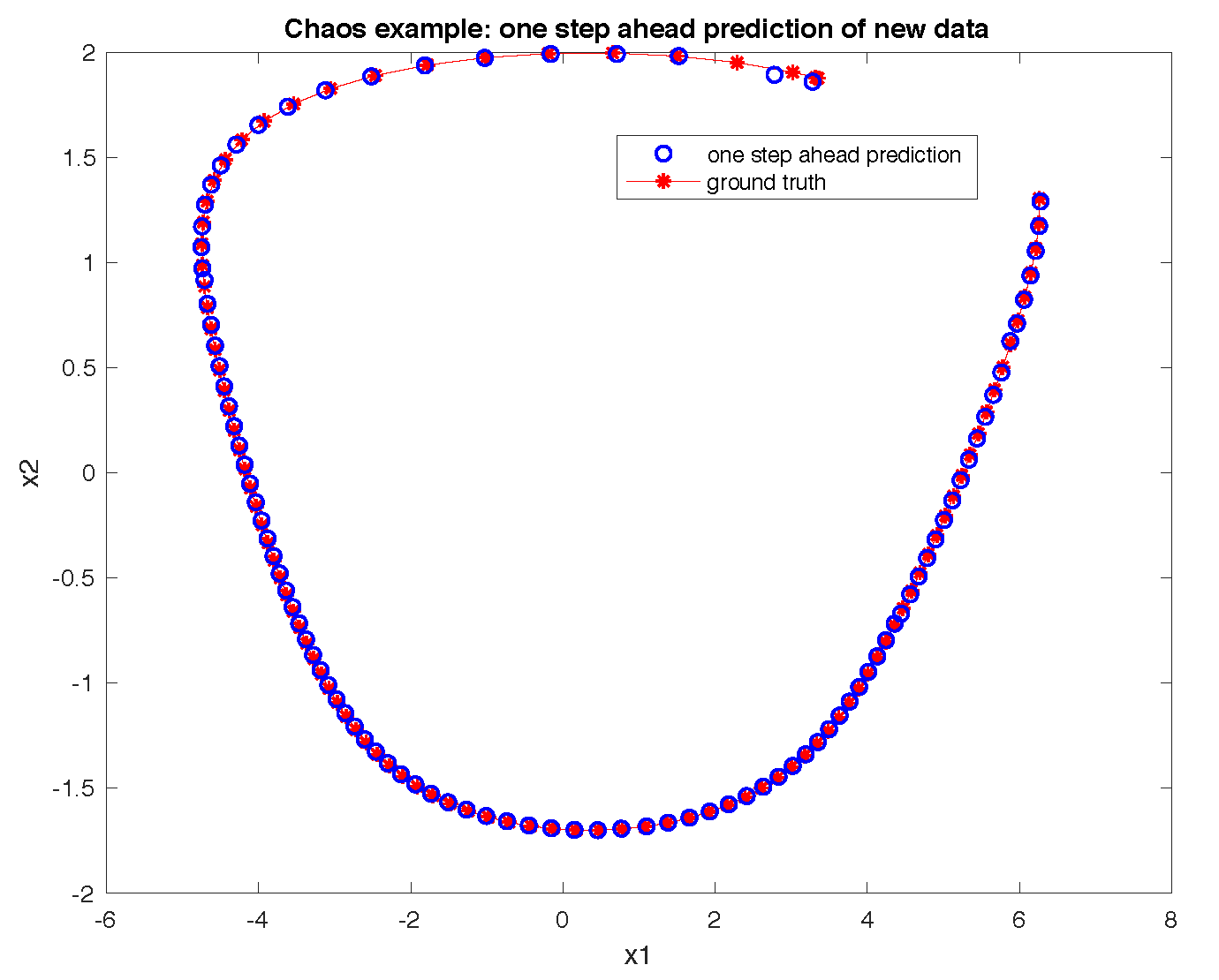}
\caption{\small Example 4: Left:  training data. Right: one step ahead predictions over an 100 step horizon, red: ground truth, blue: predictions using the proposed pipeline.}\label{fig:chaos2}
\end{center}

\end{figure}
\section{Conclusions}\label{sec:conclusions}

This paper proposes a convex optimization approach to learning Koopman operators from data. The main idea is to use delay coordinates and nonlinear, kernel based embeddings to recast the problem as a rank-constrained optimization. In turn, this optimization can be relaxed to a tractable semi-definite program.
Salient features of this approach are its ability to certify that the solution to this SDP indeed solves the original problem, and the fact that neither the order of the embedding nor of the dynamics governing their evolution need to be specified a-priori.  Further, by seeking embeddings that minimize the order of these dynamics, it leads to simpler models than those obtain for instance by simply factoring the Hankel matrix of the observed data.
The effectiveness of the proposed technique was illustrated with two examples that exhibit chaotic behavior.  In principle the approach proposed here requires solving a large SDP, and it is well known that SDPs have poor scaling properties.
However, as shown in the Appendix, the specific optimization arising in this paper exhibits an underlying sparse structure (chordal sparsity) than can be exploited to obtain algorithms whose complexity scales linearly with the number of data points, when these SDPs are solved using an ADMM based method such as the one proposed in  \cite{ZhengADMM2020}. This extension along with an extension to piecewise linear dynamics on the manifold, is currently being explored.

% \newpage
%\bibliographystyle{plain}
\bibliography{koopman.bib}
%\end{document}
%\newpage
\appendix

%\section{Supplemental Material}

\section{Technical Proofs}

\noindent \textbf{Proof of Theorem \ref{Teo:theo1}}. For simplicity assume that the roots of $\mathcal{P}(\rho)$ are simple.
Begin by noting that the $r^*$ linearly independent vectors $ \mat{v}_i$ are in $\mathcal{N}_R(\mat{p}^T)$, the right  null space of $\mat{p}^T$. Since dim$(\mathcal{N}_R(\mat{p}^T))=r^*$, it follows that these vectors form a basis of 
 $\mathcal{N}_R(\mat{p}^T)$, e.g.  $\mathcal{N}_R(\mat{p}^T) = \text{span}(\mat{V})$.  Next,  
  let ${y}_{k,j}^{(i)}$ denote the $j^{th}$ component of $\mat{y}_{k}^{(i)}$. 
  By construction the vectors 
 \[ \greekvect{\upsilon}_{k,j}^{(i)} \doteq  \begin{bmatrix} {y}_{k-r^*,j}^{(i)} & {y}_{k-r^*+1,j}^{(i)} & \ldots & {y}_{k,j}^{(i)} \end{bmatrix}^T \in \mathcal{N}_R(\mat{p}^T)  \]
 Hence $ \greekvect{\upsilon}_{k,j}^{(i)} \in \text{span}(\mat{V})$ and can be written as
 $\greekvect{\upsilon}_{k,j}^{(i)} = \mat{V}\mat{c}_{k,j}^{(i)}$.
  Repeating this reasoning for each component of $\mat{y}_k^{(i)}$ leads to:
    \[  \greekvect{\psi^{(i)}_{k}} = \left (\mat{V}\otimes \mat{I}_{m} \right ) \mat{c}_k^{(i)} 
  \doteq  \mat{D} \mat{c}_k^{(i)}\]
where $\mat{c}_k^{(i)} \doteq \sum_{j=1}^{m} \mat{c}_{k,j}^{(i)} \otimes \mat{e}_j$, $\mat{e}_j \doteq \begin{bmatrix} 0 \ldots 1 \ldots 0 \end{bmatrix}^T$ and where, for notational simplicity we use the shorthand $ \greekvect{\psi^{(i)}_{k}} \doteq \greekvect{\psi}(\greekvect{\xi}^{(i)}_k)$.
 Applying the same reasoning to each (block) component of  $\greekvect{\psi^{(i)}_{(k+1)}}$ yields:
    \[ \begin{aligned} \mat{y}_{\ell+1}^{(i)}& =
     ( \begin{bmatrix}\rho_1^{\ell+1} \ldots \rho_{r^*}^{\ell+1} \end{bmatrix} \otimes \mat{I}_m )\mat{c}^{(i)}_{k},  \; \ell=k-r^*,\ldots,k \\
    & =   \left( ( \begin{bmatrix}\rho_1^{\ell} \ldots \rho_{r^*}^{\ell} \end{bmatrix}  \text{diag}(\rho_i) ) \otimes \mat{I}_m \right ) \mat{c}^{(i)}_{k}\\
    & =  ( \begin{bmatrix}\rho_1^{\ell} \ldots \rho_{r^*}^{\ell} \end{bmatrix} \otimes \mat{I}_m )\greekvect{\Lambda} \mat{c}^{(i)}_{k}
  \end{aligned} 
     \]
     where $\greekvect{\Lambda} \doteq \text{diag}(\rho_i)\otimes \mat{I}_m $ and where we used the Kronecker's product property
   $(\mat{A}\mat{C})\otimes(\mat{B}\mat{D}) = (\mat{A}\otimes \mat{B})(\mat{C}\otimes \mat{D})$.  
       Thus
    \beq \label{eq:kron2} \begin{aligned}
       \greekvect{\psi^{(i)}_{k+1}} & = 
   \mat{D}\greekvect{\Lambda}\mat{c}^{(i)}_k \doteq  \mat{D}\mat{c}^{(i)}_{k+1}  \Rightarrow \mat{c}^{(i)}_{k+1} =\greekvect{\Lambda} \mat{c}^{(i)}_{k} 
  \end{aligned} 
  \eeq
      It follows that  $\greekvect{\Lambda} $  propagates $\mat{c}^{(i)}_{k} $, the coordinates of $\greekvect{\psi}^{(i)}(\greekvect{\xi}_k)$. Hence the eigenfunctions
       $\greekvect{\phi}_{j,\ell}(\greekvect{\xi})$  of the Koopman operator have the form $\greekvect{\phi}_{j,\ell}(\greekvect{\xi}) \doteq \mat{v}_j  \otimes  \mat{e}_{\ell}$,  $j=1,\ldots,r^*$, $\ell=1\ldots m$.  \qed

\noindent \textbf{Proof of Theorem \ref{Theo:SDP}}.    By construction, $(\mat{y}_s^{(i)})^T\mat{y}^{(i)}_t=\mat{K}_{s,t}^{(i)}$, hence satisfaction of \eqref{eq:Kc1}-\eqref{eq:Kc2} implies satisfaction of 
\eqref{eq:feas2}-\eqref{eq:feas3}. Consider now the corresponding Hankel matrices $\mat{H}_{\mat{y}^{(i)}}$.  From the definitions of $\mat{G}^{(i)}$ and $\vect{K}_{\ell,r}^{(i)}$ it follows that $\mat{H}^T_{\mat{y}^{(i)}}\mat{H}_{\mat{y}^{(i)}} = \mat{G}^{(i)}$. Since by construction rank($\mat{G}^{(i)})$ $\leq r^*$  then rank$(\mat{H}_{\mat{y}^{(i)}}) \leq r^* < r+1$ and hence \eqref{eq:feas1} is also satisfied. 
\qed

\noindent \textbf{Proof of Theorem \ref{teo:lowner}.} Existence of $g(.)$ follows from the properties of the Loewner matrix discussed in Section \ref{sec:Loewner}. Since by construction $\mat{K}_{\mat{x}} \succeq 0$, it follows that it defines a valid Kernel in the set $\mathcal{X}\cup \{ \mat{x}^*\}$. Further, since rank$(\mat{K_x}) \leq n$ it can be factored as $\mat{K}_{\mat{x}}=\mat{X}^T\mat{X}=\mat{K}_{\mat{x}}$, where, again by construction, $\mat{X}(i+1,:)=\mat{x}_i, i=1:|\mathcal{X}|$.  Let $\mat{x}^* = \mat{X}(1,:)$. From the definitions of $\greekvect{\kappa_{\mat{x}}},\greekvect{\kappa_{\mat{y}}} $ it follows that 
 $$\begin{aligned}
 & (\mat{x}^*)^T\mat{x}^* = \greekvect{\kappa_{\mat{x} _1}}, \; (\mat{y}^*)^T\mat{y}^* = \greekvect{\kappa_{\mat{y} _1}} \\
  &\mat{x}_i^T\mat{x}^* = \greekvect{\kappa_{\mat{x} _i}},  \; \mat{y}_i^T\mat{y}^* = \greekvect{\kappa_{\mat{y}_i}} \\
  &\mat{x}_i^T\mat{x}_i = \greekvect{\kappa_{\mat{x} _j}}, \; \mat{y}_i^T\mat{y}_i = \greekvect{\kappa_{\mat{y} _j}}
\end{aligned}$$  
 where $j$ is defined in \eqref{eq:SDPx4}.  Thus,  the constraint \eqref{eq:SDPx4} is simply a restatement of \eqref{eq:Kc1} in terms of the elements of  $\greekvect{\kappa_{\mat{x}}}, \greekvect{\kappa_{\mat{y}}}$.  \qed
 %, while the constraint  \eqref{eq:SDPx1} forces $\mat{x}^*$ to be a spatial neighbor of all $\mat{x}_i \in \mathbb{X}$.  \qed

\section{Exploting Chordal Sparsity}\label{sec:sparsity}

\subsection{Semi-Definite Programs and Rank Minimization Over Chordal Graphs}\label{sec:SDPsparsity}

\label{sec:chordal}

In this paper, we will reduce the problem of identifying Koopman operators to a constrained rank minimization of the form
\beq \label{eq:rc_SDP}\begin{split}
 & \min_{\mat{X}\succeq 0} \text{rank $(\mat{X})$ subject to } \\
& \text{Trace $(\mat{A}_i\mat{X}) \leq b_i, i=1,\ldots,n_c$,  $\mat{X},\mat{A}_i \in \mathbb{R}^{n\times n}$}
\end{split}
\eeq
In the specific problems arising in this paper only a small number of entries of  $\mat{X}$ appear in the trace constraints, while the role of the other entries is just to enforce that $\mat{X}\succeq 0$. Thus, as long as existence of a  \emph{minimum rank} PSD completion is guaranteed, these variables do not have to be explicitly found, allowing for a substantial computational complexity reduction.  Specifically, to the optimization \eqref{eq:rc_SDP} one can associate a graph $\mathcal{G}(\mathcal{V,E})$ with $n$ vertices in
$\mathcal{V}$  and edge set $\mathcal{E}$, where there is an edge between vertices $j$ and $\ell$ if the element $(j,\ell)$  of any of the matrices $\mat{A}_i$ is nonzero. Given a graph $\mathcal{G}$, define the cone 
\[\mathbb{S}_+^n (\mathcal{E},?)  \doteq \{ \mat{X} \in \mathbb{S}_+^n \colon \text{$\mat{X}_{i,j}$ given if $(i,j) \in \mathcal{E}$} \}\]
that is, the cone of matrices with entries fixed over the edges $\mathcal{E}$ than can be completed to be PSD.  When the graph $\mathcal{G}$ is chordal, the minimum rank over all possible matrix completions over this cone has an explicit expression, given by Dancis' Theorem:
\begin{Theorem} [\cite{Dancis92}] \label{T:DancisTheorem}
     Let $\mathcal{G}(\mathcal{V},\mathcal{E})$ be a chordal graph with a set of maximal cliques $\{\mathcal{C}_1,\mathcal{C}_2, \ldots, \mathcal{C}_{n_c}\}$. Then,
     for any  $\mat{X}\in\mathbb{S}^n_+(\mathcal{E},?)$ there exist at least one minimum rank PSD completion where 
     $$ \text{rank}(\mat{X}) = \max_{1 \leq k \leq n_c} \text{rank}(\mat{E}_{\mathcal{C}_k} \mat{X}\mat{E}_{\mathcal{C}_k}^T) $$
    where the $0/1$ matrix $\mat{E}_{\mathcal{C}_k}$ selects the variables of $\mat{X}$ corresponding to edges in the clique $\mathcal{C}_k$.
    \end{Theorem}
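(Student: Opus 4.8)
\textbf{Proof of Theorem \ref{T:DancisTheorem} (proposal).}
The plan is to establish the two halves separately: that $r \doteq \max_{1\le k\le n_c}\rank(\mat{E}_{\mathcal{C}_k}\mat{X}\mat{E}_{\mathcal{C}_k}^T)$ is a lower bound on the rank of \emph{every} PSD completion, and that it is \emph{attained} by an explicitly constructed completion. The lower bound is the easy direction. Fix any $\hat{\mat{X}}\in\mathbb{S}^n_+$ agreeing with $\mat{X}$ on $\mathcal{E}$ (and on the diagonal, which is part of the data by convention). Since each maximal clique $\mathcal{C}_k$ induces a complete subgraph, every position $(i,j)$ with $i,j\in\mathcal{C}_k$ lies in $\mathcal{E}$, so the principal submatrix $\mat{E}_{\mathcal{C}_k}\hat{\mat{X}}\mat{E}_{\mathcal{C}_k}^T$ equals the prescribed block $\mat{E}_{\mathcal{C}_k}\mat{X}\mat{E}_{\mathcal{C}_k}^T$; as the rank of a principal submatrix never exceeds the rank of the full matrix, $\rank(\hat{\mat{X}})\ge \rank(\mat{E}_{\mathcal{C}_k}\mat{X}\mat{E}_{\mathcal{C}_k}^T)$ for each $k$, and taking the maximum gives $\rank(\hat{\mat{X}})\ge r$.

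For attainability I would construct a completion of rank exactly $r$ using the clique-tree (junction-tree) structure that a chordal graph always admits. Each clique block is PSD of rank at most $r$, so factor it as a Gram matrix $\mat{E}_{\mathcal{C}_k}\mat{X}\mat{E}_{\mathcal{C}_k}^T=\mat{U}_k^T\mat{U}_k$ with $\mat{U}_k\in\mathbb{R}^{r\times|\mathcal{C}_k|}$ (padding the factor up to the common dimension $r$); the columns of $\mat{U}_k$ assign to each $v\in\mathcal{C}_k$ a vector $u_v^{(k)}\in\mathbb{R}^r$ realizing the prescribed inner products inside $\mathcal{C}_k$. For a separator $S=\mathcal{C}_k\cap\mathcal{C}_l$, the two families $\{u_v^{(k)}\}_{v\in S}$ and $\{u_v^{(l)}\}_{v\in S}$ share the Gram matrix $\mat{X}_{SS}$, hence are related by some $Q\in O(r)$ obtained by extending the isometry between their (possibly proper) spans to all of $\mathbb{R}^r$. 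Rooting the clique tree and propagating these orthogonal corrections from parent to child, the running-intersection property guarantees the corrected vectors are globally consistent: each vertex $v$ lives in a connected subtree of cliques, so it ends up with a single well-defined $u_v\in\mathbb{R}^r$. Then $\hat{\mat{X}}_{ij}\doteq\langle u_i,u_j\rangle$ is a PSD matrix of rank at most $r$ that agrees with $\mat{X}$ on $\mathcal{E}$, and by the lower bound its rank is exactly $r$, proving the claim.

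The main obstacle is making the alignment step airtight, and it breaks into three points: (i) invoking that a chordal graph admits a clique tree with the running-intersection property; (ii) upgrading ``equal Gram matrices $\Rightarrow$ related by an element of $O(r)$'' so that it holds on the full ambient $\mathbb{R}^r$ even when $\rank(\mat{X}_{SS})<r$ --- this is where padding all factors to the common dimension $r$ is essential, since it leaves room to extend a partial isometry to an orthogonal map; and (iii) checking that composing corrections along tree paths is path-independent, which is precisely what the tree (acyclic) structure plus running intersection delivers, so no ``monodromy'' can obstruct the gluing. The remaining bookkeeping --- that the glued Gram matrix is PSD, has rank $\le r$, and restricts correctly to $\mathcal{E}$ --- is routine. \qed
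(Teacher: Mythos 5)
The paper offers no proof of this statement---it is quoted from Dancis (1992) as a known result---so there is no internal argument to compare yours against; judged on its own terms, your proof is correct. The lower bound is exactly right: each maximal clique induces a complete subgraph, so the block $\mat{E}_{\mathcal{C}_k}\mat{X}\mat{E}_{\mathcal{C}_k}^T$ is fully specified and appears as a principal submatrix of every PSD completion, whence every completion has rank at least $r \doteq \max_k \rank(\mat{E}_{\mathcal{C}_k}\mat{X}\mat{E}_{\mathcal{C}_k}^T)$. Your attainability construction is the standard junction-tree/Gram-gluing argument, and the three points you isolate are precisely the ones that need care, and all go through: (i) chordality is equivalent to the existence of a clique tree with the running-intersection property; (ii) two tuples in $\mathbb{R}^r$ with equal Gram matrices are related by an element of $O(r)$, since the correspondence induces a well-defined linear isometry between their spans whose orthogonal complements have equal dimension---this is exactly where padding every factor to the common dimension $r$ is needed; (iii) because the cliques containing a fixed vertex $v$ form a subtree, $v$ lies in every separator along the tree path joining any two of them, so the successive alignments assign $v$ a single well-defined vector and the glued Gram matrix restricts to the prescribed block on every maximal clique, hence on every edge of $\mathcal{E}$. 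One premise you use silently and should state: the clique blocks are themselves PSD of rank at most $r$, which follows from membership in $\mathbb{S}^n_+(\mathcal{E},?)$ (some PSD completion exists, and the blocks are its principal submatrices). For the record, Dancis' original argument is the matrix-algebraic counterpart of yours: an explicit minimum-rank formula for the one-unspecified-block (two overlapping cliques) case, propagated by induction along the clique tree; your geometric gluing is equivalent and, if anything, makes the role of the running-intersection property more transparent.
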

    In addition, the cone $\mathbb{S}_+^n (\mathcal{E},?)$ can be characterized using the following result (Grone's Theorem):
\begin{Theorem} [\cite{grone1984positive}] \label{T:GroneTheorem}
     Let $\mathcal{G}(\mathcal{V},\mathcal{E})$ be a chordal graph with a set of maximal cliques $\{\mathcal{C}_1,\mathcal{C}_2, \ldots, \mathcal{C}_{n_c}\}$. Then, $\mat{X}\in\mathbb{S}^n_+(\mathcal{E},?)$ if and only if
     $$ \mat{X}_k = \mat{E}_{\mathcal{C}_k} \mat{X}\mat{E}_{\mathcal{C}_k}^T \in \mathbb{S}_+,
    \qquad k=1,\,\ldots,\,n_c$$
     \end{Theorem}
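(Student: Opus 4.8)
The plan is to prove the ``if and only if'' by induction on the number $n_c$ of maximal cliques of $\mathcal{G}$, using the clique-tree description of chordal graphs. The forward implication is immediate: if $\mat{X}\in\mathbb{S}^n_+$ then every principal submatrix inherits positive semidefiniteness, in particular each $\mat{X}_k=\mat{E}_{\mathcal{C}_k}\mat{X}\mat{E}_{\mathcal{C}_k}^T$. The whole content is therefore the converse, namely that clique-wise positive semidefiniteness forces the existence of a global PSD completion, and this is what the induction establishes.

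First I would invoke the standard fact that a chordal graph admits a clique tree $\mathcal{T}$ on $\{\mathcal{C}_1,\dots,\mathcal{C}_{n_c}\}$ enjoying the running-intersection property: for any $i,j$, the set $\mathcal{C}_i\cap\mathcal{C}_j$ lies in every clique on the $\mathcal{T}$-path joining $\mathcal{C}_i$ and $\mathcal{C}_j$. For $n_c=1$ the graph is complete, $\mat{X}$ is fully specified and coincides with its single clique block, so the hypothesis gives $\mat{X}\succeq0$ directly. For $n_c\ge2$, pick a leaf clique, say $\mathcal{C}_{n_c}$, with unique $\mathcal{T}$-neighbour $\mathcal{C}_q$, and set $\mathcal{S}=\mathcal{C}_{n_c}\cap\mathcal{C}_q$, $\alpha=\mathcal{C}_{n_c}\setminus\mathcal{S}$, $\beta=\mathcal{V}\setminus\mathcal{C}_{n_c}$. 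By running intersection $\mathcal{C}_{n_c}\cap\mathcal{C}_j\subseteq\mathcal{S}$ for every $j$, so no vertex of $\alpha$ lies in any other clique; hence the induced subgraph on $\mathcal{V}\setminus\alpha=\mathcal{S}\cup\beta$ is chordal with maximal cliques exactly $\{\mathcal{C}_1,\dots,\mathcal{C}_{n_c-1}\}$, whose specified entries and clique blocks are inherited from $\mat{X}$. The inductive hypothesis yields a PSD completion $\mat{Y}$ of $\mat{X}$ on $\mathcal{S}\cup\beta$. Because $\mathcal{S}$ is a clique, the $\mathcal{S}\times\mathcal{S}$ entries of $\mat{X}$ are specified, so $\mat{Y}$ agrees on $\mathcal{S}$ with the specified, PSD block $\mat{X}[\mathcal{C}_{n_c}]$, and the only still-unspecified entries of the matrix are those between $\alpha$ and $\beta$.

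The crux is a two-block gluing lemma: given PSD matrices $A=\begin{bmatrix}A_{\alpha\alpha}&A_{\alpha\mathcal{S}}\\A_{\mathcal{S}\alpha}&C\end{bmatrix}$ on $\alpha\cup\mathcal{S}$ and $B=\begin{bmatrix}C&B_{\mathcal{S}\beta}\\B_{\beta\mathcal{S}}&B_{\beta\beta}\end{bmatrix}$ on $\mathcal{S}\cup\beta$ with a common $\mathcal{S}$-block $C$, there is a PSD matrix on $\alpha\cup\mathcal{S}\cup\beta$ restricting to $A$ and $B$. I would construct it by choosing the $(\alpha,\beta)$ block to be $A_{\alpha\mathcal{S}}\,C^{+}B_{\mathcal{S}\beta}$, with $C^{+}$ the Moore--Penrose pseudoinverse, and then verify that the result is PSD by forming the generalized Schur complement with respect to the $\mathcal{S}$-block: this reduction leaves the block-diagonal matrix with blocks $A_{\alpha\alpha}-A_{\alpha\mathcal{S}}C^{+}A_{\mathcal{S}\alpha}$ and $B_{\beta\beta}-B_{\beta\mathcal{S}}C^{+}B_{\mathcal{S}\beta}$, both of which are PSD because $A\succeq0$ and $B\succeq0$. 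Applying this lemma with $A=\mat{X}[\mathcal{C}_{n_c}]$ and $B=\mat{Y}$ produces the desired PSD completion and closes the induction.

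I expect the main obstacle to be the bookkeeping in the gluing lemma when $C$ is singular: one must first record the range conditions $\mathrm{range}(A_{\mathcal{S}\alpha})\subseteq\mathrm{range}(C)$ and $\mathrm{range}(B_{\mathcal{S}\beta})\subseteq\mathrm{range}(C)$, which follow from $A,B\succeq0$, in order to guarantee that the generalized Schur complement identities are valid and that the off-diagonal formula $A_{\alpha\mathcal{S}}C^{+}B_{\mathcal{S}\beta}$ does not depend on the particular generalized inverse chosen. Everything else --- the reduction to a single leaf clique, the heredity of chordality under vertex deletion, and the matching of specified entries on $\mathcal{S}$ --- is routine graph bookkeeping once the clique-tree facts are granted.
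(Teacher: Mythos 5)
The paper offers no proof of this statement: it is Grone's positive semidefinite completion theorem, quoted directly from the cited 1984 reference, so there is no in-paper argument to compare yours against. That said, your proof is correct and is essentially the classical argument for this result: the forward direction is heredity of positive semidefiniteness under principal submatrices, and the converse is an induction along a clique tree with the running-intersection property, peeling off a leaf clique $\mathcal{C}_{n_c}$, completing the remaining chordal piece on $\mathcal{S}\cup\beta$ by induction, and gluing across the separator by taking the $(\alpha,\beta)$ block to be $A_{\alpha\mathcal{S}}C^{+}B_{\mathcal{S}\beta}$. The points that genuinely need care are exactly the ones you identify: (i) that the maximal cliques of the induced subgraph on $\mathcal{V}\setminus\alpha$ are precisely $\mathcal{C}_1,\dots,\mathcal{C}_{n_c-1}$, which holds because running intersection forces every vertex of $\alpha$ to lie only in $\mathcal{C}_{n_c}$ and no $\mathcal{C}_j$ with $j<n_c$ can become non-maximal after deleting $\alpha$; (ii) the range inclusions $\operatorname{range}(A_{\mathcal{S}\alpha})\subseteq\operatorname{range}(C)$ and $\operatorname{range}(B_{\mathcal{S}\beta})\subseteq\operatorname{range}(C)$, which follow from $A,B\succeq 0$ and are what make the generalized Schur complement identity (block-diagonal with blocks $A_{\alpha\alpha}-A_{\alpha\mathcal{S}}C^{+}A_{\mathcal{S}\alpha}$ and $B_{\beta\beta}-B_{\beta\mathcal{S}}C^{+}B_{\mathcal{S}\beta}$) valid when $C$ is singular; and (iii) the observation that no entry between $\alpha$ and $\beta$ is specified in the partial matrix, since an edge from $\alpha$ to $\beta$ would have to lie in a maximal clique other than $\mathcal{C}_{n_c}$, which is impossible --- so the gluing step only assigns genuinely free entries. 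With those checks made explicit the induction closes, and the argument is sound.
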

%In other words, a matrix $X$ with fixed entries is PSD-completable if its entries restricted to the cliques $\mathcal{C}_i$  are PSD matrices. 
%semi-definite program (SDP) of the form:
%\beq  \label{eq:SDP} \begin{split} & \min_{\mat{X}\succeq 0} \text{Trace $(\mat{WX})$ subject to } \\
%& \text{Trace $(\mat{A}_i\mat{X}) \leq b_i, i=1,\ldots,n_c$,  $\mat{X},\mat{A}_i \in \mathbb{R}^n$}
%\end{split}
%\eeq
Combining the two theorems above leads to the following result.
\begin{Corollary}\label{coro:chordal} The optimization
\eqref{eq:rc_SDP} is equivalent to:
\beq \label{eq:rc_chordal}\begin{split}
 & \min \sum_k \text{rank ($\mat{E}_{\mathcal{C}_k} \mat{X}\mat{E}_{\mathcal{C}_k}^T$) subject to } \\
& \mat{E}_{\mathcal{C}_k} \mat{X}\mat{E}_{\mathcal{C}_k}^T \succeq 0 \\
& \text{Trace $(\mat{A}_i\mat{X}) \leq b_i, i=1,\ldots,n_c$}
\end{split}
\eeq
\end{Corollary}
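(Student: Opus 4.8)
The plan is to show the two optimization problems have the same feasible set (modulo the hidden entries of $\mat{X}$) and the same objective value, by invoking Grone's Theorem (Theorem \ref{T:GroneTheorem}) for the constraints and Dancis' Theorem (Theorem \ref{T:DancisTheorem}) for the objective. First I would observe that, by construction of the graph $\mathcal{G}(\mathcal{V},\mathcal{E})$, every entry of $\mat{X}$ appearing in any constraint $\text{Trace}(\mat{A}_i\mat{X})\leq b_i$ corresponds to an edge in $\mathcal{E}$; hence the value of each constraint function depends only on the ``specified'' entries $\{\mat{X}_{i,j}\colon (i,j)\in\mathcal{E}\}$, i.e. on $\mat{X}\in\mathbb{S}^n_+(\mathcal{E},?)$ together with a choice of which of those are fixed. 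So the trace constraints are literally the same in \eqref{eq:rc_SDP} and \eqref{eq:rc_chordal}.

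Next I would handle the positive-semidefiniteness requirement. In \eqref{eq:rc_SDP} one asks for a genuine $\mat{X}\succeq 0$; in \eqref{eq:rc_chordal} one asks only that every clique submatrix $\mat{X}_k=\mat{E}_{\mathcal{C}_k}\mat{X}\mat{E}_{\mathcal{C}_k}^T\succeq 0$. Grone's Theorem says these are equivalent, in the sense that a partial matrix with its edge entries specified admits a PSD completion iff all maximal-clique submatrices are PSD. Therefore any feasible point of \eqref{eq:rc_chordal} (a specification of the edge entries with PSD clique submatrices) extends to at least one $\mat{X}\succeq 0$ feasible for \eqref{eq:rc_SDP}, and conversely the restriction of any feasible $\mat{X}\succeq 0$ to its edge entries is feasible for \eqref{eq:rc_chordal}. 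This establishes a correspondence between feasible sets.

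It remains to match the objectives. For a fixed specification of the edge entries, the quantity $\sum_k \text{rank}(\mat{E}_{\mathcal{C}_k}\mat{X}\mat{E}_{\mathcal{C}_k}^T)$ in \eqref{eq:rc_chordal} is determined by that specification alone. On the side of \eqref{eq:rc_SDP}, among all PSD completions of that specification, Dancis' Theorem guarantees there is one whose rank equals $\max_{k}\text{rank}(\mat{E}_{\mathcal{C}_k}\mat{X}\mat{E}_{\mathcal{C}_k}^T)$, and this is clearly the minimum possible rank of any completion (the rank of a matrix is at least the rank of any principal submatrix). Since \eqref{eq:rc_SDP} minimizes rank, the optimizer will always select such a minimum-rank completion, so the achievable objective over a given edge-specification is exactly $\max_k \text{rank}(\mat{E}_{\mathcal{C}_k}\mat{X}\mat{E}_{\mathcal{C}_k}^T)$. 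Thus \eqref{eq:rc_SDP} is equivalent to minimizing $\max_k \text{rank}(\mat{E}_{\mathcal{C}_k}\mat{X}\mat{E}_{\mathcal{C}_k}^T)$ over edge-specifications with PSD clique submatrices.

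The one genuine gap is that \eqref{eq:rc_chordal} as written uses the \emph{sum} $\sum_k \text{rank}(\cdot)$ rather than the \emph{max} $\max_k \text{rank}(\cdot)$. These are not equal in general, but they have the same set of minimizers and, more to the point, the sum is the convexity-friendly surrogate that will subsequently be relaxed to a sum of nuclear norms — so for the purpose of the relaxation that follows, replacing $\max$ by $\sum$ is the intended move. The main obstacle in writing this cleanly is therefore not the PSD-completion machinery (which is a direct citation of Grone and Dancis) but carefully phrasing the sense in which ``equivalent'' is meant: one must make explicit that the unspecified entries of $\mat{X}$ play no role in either constraints or the attainable minimum rank, and that passing from $\max_k$ to $\sum_k$ is a (rank-preserving, since bounded by $n_c\cdot\max_k$) reformulation chosen to enable the nuclear-norm relaxation. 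I would state the equivalence at the level of the specified entries and remark that any minimum-rank completion recovers a solution of \eqref{eq:rc_SDP}.
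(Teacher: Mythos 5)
Your overall route is the one the paper intends: the corollary is stated immediately after Grone's and Dancis' theorems with only the remark that it follows by ``combining the two theorems above,'' and your use of Grone's theorem to replace $\mat{X}\succeq 0$ by the clique conditions $\mat{E}_{\mathcal{C}_k}\mat{X}\mat{E}_{\mathcal{C}_k}^T\succeq 0$, together with Dancis' theorem to identify the minimum achievable rank over all PSD completions of a fixed edge-specification with $\max_k\operatorname{rank}(\mat{E}_{\mathcal{C}_k}\mat{X}\mat{E}_{\mathcal{C}_k}^T)$, is exactly that argument, carried out more explicitly than the paper does. You also correctly isolate the only real issue: Dancis' theorem delivers a $\max$ over cliques, while \eqref{eq:rc_chordal} minimizes a $\sum$.

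Where your write-up goes wrong is the sentence claiming that the $\max$ and the $\sum$ ``have the same set of minimizers.'' That is false in general. If one feasible edge-specification yields clique ranks $(3,0)$ and another yields $(2,2)$, the first minimizes the sum while the second minimizes the max, so the two surrogates can select different solutions and return different optimal values; nothing in the structure of the feasible set rules this out. The defensible statement --- and the one that should replace that sentence --- is that for every feasible edge-specification the minimum rank of a PSD completion equals $\max_k\operatorname{rank}(\mat{E}_{\mathcal{C}_k}\mat{X}\mat{E}_{\mathcal{C}_k}^T)$, so \eqref{eq:rc_SDP} is exactly equivalent to minimizing that $\max$, and the passage to $\sum_k$ is a surrogate satisfying $\max_k \leq \sum_k \leq n_c\,\max_k$, adopted because it relaxes to a sum of nuclear norms. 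As written, the corollary's claim of ``equivalence'' is itself imprecise on this point; your proof should assert the equivalence for the $\max$ formulation and present the $\sum$ as a bounded surrogate rather than claim an identity of minimizers.
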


 Since rank minimization problems are generically  NP-hard, a standard convex relaxation is to replace rank by is convex envelope, trace
\cite{MohanFazel}. This substitution leads to a convex SDP that can be solved to  $\epsilon$--optimality in polynomial time using interior point (IP) methods. However, while efficient, these methods have relatively poor scaling properties ($\mathcal{O}(n^2n_c^2 + n^3n_c)$).  On the other hand,  as we will show in the sequel, the specific problem arising in this paper has chordal sparsity. Hence, the use of Corollary \ref{coro:chordal}  to decompose the objective into $\sum_k \text{Trace}(\mat{E}_{\mathcal{C}_k} \mat{X}\mat{E}_{\mathcal{C}_k}^T)$ leads to a SDP where 
 each of the PSD constraints has  (size of $\mathcal{C}_k$) variables. When combined with an ADMM approach where the cost of each iteration is 
 $\mathcal{O}(\text{number of variables}^3)$ \cite{ZhengADMM2020}, using this decomposition leads to a reduction of $\frac{1}{n_c}(\frac{n}{\text{max size of clique}})^3$ in computational complexity.

\subsection{Exploiting Chordal Sparsity}\label{sec:sparsity}

The approach outlined in Section \ref{sec:relaxation} works well for small to medium sized problems. However, its computational complexity grows roughly as
$\mathcal{O}(\text{(number of data points)}^6)$\footnote{The number of free variables in $\mat{K}$ is $(\text{number of data points)}^2$. Thus, if the SDP is solved using an interior point method,  computational complexity scales roughly as $\mathcal{O}(\text{(number of variables)}^6)$.}. Fortunately,  as we show next,  the  convex relaxation of \eqref{eq:Kc0}-\eqref{eq:Kc2} is endowed with chordal sparsity. Hence the  decomposition outlined in Section \ref{sec:chordal}, combined with an ADMM based algorithm such as the one proposed in  \cite{ZhengADMM2020}, can be exploited to substantially reduce computational complexity.
Let $r$ be an upper bound of the optimal rank $r^*$. 
Note that the only elements of $\mat{K}^{(i)}$ that appear explicitly in \eqref{eq:Kc0}-\eqref{eq:Kc2} are those of the form $\mat{K}^{(i)}_{s,t}$ where either
$|s-t|\leq r$ or  $\|\mat{x}_s - \mat{x}_r\|_2 \leq \delta$. Let 
\beq \label{eq:cliques} \begin{split}
&\mathcal{T}_{\ell} \doteq \{ (s,t): \ell \leq s,t \leq r \}\\
& \mathcal{S}_{\ell} \doteq \{(s,q): \text{$ \ell \leq s \leq r$ and $\|\mat{x}_s - \mat{x}_q \|_2 \leq \delta$} \} \\
& \mathcal{E}_\ell \doteq \mathcal{T}_\ell \cup \mathcal{S}_\ell
\end{split}
\eeq
To the  optimization problem  \eqref{eq:Kc0}-\eqref{eq:Kc2} we can associate  a graph with cliques  $\mathcal{C}_\ell$ defined by the edge sets $\mathcal{E}_\ell$.  From Corollary \ref{coro:chordal} and the fact that only the variables in $\mathcal{C}_\ell$ appear in the objective \eqref{eq:Kc0}, it follows that each constraint $\mat{K}^{(i)} \succeq 0$ can be replaced by a collection of smaller constraints of the form $\mat{E}^T_{\mathcal{C}_\ell}\mat{K}^{(i)}\mat{E}_{\mathcal{C}_\ell} \succeq 0$, where the matrix $\mat{E}_{\mathcal{C}_\ell}$ selects the entries of $\mat{K}^{(i)}$ corresponding to edges in
$\mathcal{C}_\ell$.  Assuming a fixed number $n_v$ of spacial neighbours,  the size of each clique $\mathcal{C}_\ell$  is given by $|\mathcal{C}_\ell | =(r+1)(1+n_v)$ and each trajectory has $T_i$ cliques. It follows that the computational complexity when using the clique-based decomposition roughly decreases by a factor of
$\frac{(1+r)^3(1+n_v)^3}{T_i}$. It is worth noting that, when using the clique decomposition, the overall computational complexity increases as $T_i 
(1+r)^3(1+n_v)^3$. This scaling is linear, rather than polynomial, in the number of data points.

%Consider the variable $\mat{K}^{(i}}_{r,s}$

\end{document}